\date{August 27, 2018}
\newtheorem{theorem}{Theorem}
\newtheorem{proposition}[theorem]{Proposition}
\newtheorem{lemma}[theorem]{Lemma}
\newtheorem{corollary}[theorem]{Corollary}
\theoremstyle{definition}
\theoremstyle{remark}
\newtheorem{remark}[theorem]{Remark}
\DeclareMathOperator{\Tr}{Tr}
\DeclareMathOperator{\tr}{tr}
\def\bq{\begin{eqnarray}}
\def\eq{\end{eqnarray}}
\def\bqq{\begin{align*}}
\def\eqq{\end{align*}}
\def\nn{\nonumber}
\def\eps{\varepsilon}
\renewcommand{\Re}{\operatorname{Re}}
\newcommand\1{{\ensuremath {\mathds 1} }}
\def\R {\mathbb{R}}
\def\R {\mathbb{R}}
\def\N {\mathbb{N}}
\def\d{{\rm d}}
\title[Improved Lieb--Thirring inequality] {The Lieb--Thirring inequality revisited}
\author[R. L. Frank]{Rupert L. Frank}
\address[R. L. Frank]{Department of Mathematics, LMU Munich, Theresienstrasse 39, 80333 Munich, Germany, and Mathematics 253-37, Caltech, Pasa\-de\-na, CA 91125, USA} 
\email{rlfrank@caltech.edu}
\author[D. Hundertmark]{Dirk Hundertmark}
\address[D. Hundertmark]{Department of Mathematics, Institute for Analysis, Karlsruhe Institute of Technology, 76128 Karlsruhe, Germany, and Department of Mathematics, Altgeld Hall, University of Illinois at Urbana-Champaign, 1409 W. Green Street, Urbana, IL 61801} 
\email{dirk.hundertmark@kit.edu}
\author[M. Jex]{Michal Jex} 
\address[M. Jex]{Department of mathematics, Institute for Analysis, Karlsruhe Institute of Technology, 76128 Karlsruhe, Germany. On leave from Department of Physics, Faculty of \mbox{Nuclear} Sciences and Physical Engineering, Czech Technical University in Prague, B\v{r}ehov\'{a} 7, 11519 Prague, Czech Republic}
\email{michal.jex@kit.edu}
\author[P. T. Nam]{Phan Th\`anh Nam}
\address[P. T. Nam]{Department of Mathematics, LMU Munich, Theresienstrasse 39, 80333 Munich, Germany} 
\email{nam@math.lmu.de}
\begin{document}

\begin{abstract} We provide new estimates on the best constant of the Lieb--Thirring inequality 
for the sum of the negative eigenvalues of Schr\"odinger operators, which significantly improve the so far existing bounds. 
\end{abstract}

\maketitle

\setcounter{tocdepth}{1}

\section{Introduction}

In 1975, Lieb and Thirring \cite{LieThi-75,LieThi-76} proved that  the sum of all negative eigenvalues of Schr\"odinger operators $-\Delta+V$ in $L^2(\R^d)$, with a real-valued potential $V:\R^d\to \R$, admits the bound
\bq \label{eq:LT-sum-eigenvalue}
\Tr[-\Delta + V]_ - \le  L_{1,d} \int_{\R^d} V(x)_-^{1+\frac{d}{2}} \d x
\eq
for a finite constant $L_{1,d}>0$ depending only on the dimension, for all $d\ge 1$. Here we use the convention that $t_\pm=\max\{\pm t,0\}$.

Inequality \eqref{eq:LT-sum-eigenvalue} should be compared with Weyl's law \cite[Theorem 12.12]{LieLos-01} 
\begin{equation}\label{eq:semi-app}
\Tr[-h^2\Delta+V]_-  \approx \frac{1}{(2\pi)^d}\iint_{\R^d\times \R^d} \big[| hk|^2+V(x) \big]_- \d k \d x = L_{1,d}^{\rm cl} h^{-d} \int_{\R^d} V(x)_-^{1+d/2} \d x
\end{equation}
where
$$
L_{1,d}^{\rm cl} = \frac{2}{d+2} \cdot \frac{|B_1|}{(2\pi)^d}
$$
with $|B_1|$ the volume of the unit ball in $\R^d$. While \eqref{eq:semi-app} is only correct in the semiclassical limit $h\to 0$, the Lieb--Thirring inequality \eqref{eq:LT-sum-eigenvalue} is a universal bound for all finite parameters.

A simpler version of \eqref{eq:LT-sum-eigenvalue} is the following bound for a single eigenvalue,
\bq \label{eq:So}
\int_{\R^d}\Big( |\nabla u(x)|^2 + V(x)|u(x)|^2 \Big)\d x \ge - L_{1,d}^{\rm So} \int_{\R^d} V(x)_-^{1+d/2} \d x,
\eq
which is a consequence of Sobolev's inequality, namely some sort of the {uncertainty principle}. This inequality is essentially due to Keller \cite{Keller-61}; see also \cite{CarFraLie-14} for a stability analysis. The Lieb--Thirring inequality \eqref{eq:LT-sum-eigenvalue} extends Sobolev's inequality \eqref{eq:So} by taking the {exclusion principle} into account. 

The {Lieb--Thirring conjecture} \cite{LieThi-76} concerns the best constant in \eqref{eq:LT-sum-eigenvalue} and states that this is given by 
\bq \label{eq:LT-conj}
L_{1,d}=\max\{L_{1,d}^{\rm cl}, L_{1,d}^{\rm So}\}=
\begin{cases}
L_{1,d}^{\rm cl} & \text{ if } d\ge 3, \\
L_{1,2}^{\rm So} & \text{ if } d=1,2,  
\end{cases}
\eq
with $L_{1,d}^{\rm So}$ being the best constant in \eqref{eq:So}. While the lower bound $L_{1,d}\ge\max\{L_{1,d}^{\rm cl}, L_{1,d}^{\rm So}\}$ is obvious, proving the matching upper bound is a major challenge in mathematical physics. 

The original proof of Lieb and Thirring \cite{LieThi-75} gave  $L_{1,d}/L_{1,d}^{\rm cl}\le 4\pi$ in $d=3$. Since then, there have been many contributions devoted to improving the upper bound on $L_{1,d}$ \cite{Lieb-84,EdeFoi-91,BlaStu-96,HunLapWei-00,DolLapLos-08}. 
The currently best-known result is 
\bq \label{eq:best-known}
L_{1,d}/L_{1,d}^{\rm cl} \le \frac{ \pi}{\sqrt{3}} = 1.814... 
\eq
which was proved for $d=1$ by Eden-Foias in 1991 \cite{EdeFoi-91} and then extended to all $d\ge 1$ by Dolbeault, Laptev and Loss in 2008 \cite{DolLapLos-08}.

Our new result is
\begin{theorem} \label{thm:main} For all $d\ge 1$, the best constant in the Lieb--Thirring inequality \eqref{eq:LT-sum-eigenvalue} satisfies
$$
L_{1,d}/L_{1,d}^{\rm cl} \le  1.456. 
$$
\end{theorem}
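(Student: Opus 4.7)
The plan is to refine the approach of Eden-Foias \cite{EdeFoi-91} and Dolbeault-Laptev-Loss \cite{DolLapLos-08}. The latter reduces the multidimensional Lieb-Thirring inequality to a one-dimensional operator-valued inequality
\begin{equation*}
\Tr_{L^2(\R;\fh)}\bigl[-\partial_x^2 + W(x)\bigr]_- \le L \int_{\R} \Tr_{\fh}\bigl[(W(x)_-)^{3/2}\bigr]\, \d x,
\end{equation*}
via the Laptev-Weidl induction-in-dimensions scheme, in such a way that any admissible constant $L/L_{1,1}^{\rm cl}$ transfers to every dimension $d\ge 1$. I would therefore aim to establish this inequality uniformly in the auxiliary Hilbert space $\fh$ with $L/L_{1,1}^{\rm cl}\le 1.456$, thereby handling all dimensions simultaneously at the outset.

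To improve the one-dimensional operator-valued constant, I would use the layer-cake identity
\begin{equation*}
\Tr[-\partial_x^2 + W]_- = \int_0^\infty N(\tau)\, \d\tau,\qquad N(\tau)=\Tr\,\1\bigl(-\partial_x^2+W\le -\tau\bigr),
\end{equation*}
and bound $N(\tau)$ by the Birman-Schwinger operator $K_\tau = W_-^{1/2}(-\partial_x^2+\tau)^{-1}W_-^{1/2}$, whose integral kernel contains the explicit factor $(2\sqrt\tau)^{-1}\exp(-\sqrt\tau|x-y|)$. The argument in \cite{DolLapLos-08} applies a Hilbert-Schmidt/Schur-type estimate at this stage; my plan is to replace it by a sharper trace-class bound, introducing an adjustable weight $\phi(x)>0$ and writing
\begin{equation*}
K_\tau=(\phi^{1/2}W_-^{1/2})\bigl(\phi^{-1/2}(-\partial_x^2+\tau)^{-1}\phi^{-1/2}\bigr)(W_-^{1/2}\phi^{1/2}),
\end{equation*}
then optimizing the weight $\phi$ together with an auxiliary cutoff in $\tau$ to balance kinetic and potential contributions before integrating in $\tau$. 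After optimization I expect a one- or two-parameter variational problem whose minimum is computable explicitly, and whose value should lie strictly below $\pi/\sqrt{3}=1.814\ldots$, with the target numerical value $1.456$.

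The main obstacle is that every improvement in the second step must survive in the operator-valued setting required by the dimensional lifting: scalar tools such as rearrangement or bubble decomposition are unavailable, and one must rely throughout on operator monotone and operator convex inequalities (Araki-Lieb-Thirring, Hansen-Pedersen, Heinz) while tracking trace-class estimates for matrix-valued $W$. Carrying out the resulting variational optimization rigorously and verifying that the infimum genuinely drops to at most $1.456$ — rather than merely improving $\pi/\sqrt{3}$ by a little — is the heart of the matter; a purely numerical check would be insufficient, so one must produce closed-form bounds along the optimization path.
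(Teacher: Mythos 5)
Your outer strategy --- reduce everything to a one-dimensional inequality with operator-valued potentials and transfer it to all $d$ by the Laptev--Weidl induction in dimension --- is exactly the lifting step the paper also uses, and that part is sound. The genuine gap is in the core of your plan: the improved one-dimensional operator-valued constant is never actually obtained. You propose to sharpen the Eden--Foias/Dolbeault--Laptev--Loss analysis by bounding $N(\tau)$ through the Birman--Schwinger operator $K_\tau$ with an adjustable weight $\phi$ and a cutoff in $\tau$, and you then \emph{assert} that the resulting one- or two-parameter optimization ``should'' drop to $1.456$. Nothing in the proposal supports that number. Note that the unweighted version of your scheme, $N(\tau)\le \Tr K_\tau$ followed by integration in $\tau$, gives $L_{1,1}\le 2/3$, i.e.\ the ratio $\pi\approx 3.14$, which is \emph{worse} than $\pi/\sqrt3$: counting the eigenvalues of $K_\tau$ that exceed $1$ by its full trace is intrinsically very lossy, and it is not at all clear (and you give no computation showing) that a weight $\phi$ and a $\tau$-cutoff can recover a factor as small as $1.456$, a number that in the paper emerges from a completely specific variational problem with an explicit trial-function bound. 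Since the whole point of the theorem is the numerical value, ``expect a variational problem whose minimum lies below $\pi/\sqrt3$'' is not a proof; the quantitative heart of the argument is missing.

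For contrast, the paper does not touch the Birman--Schwinger side at all. It works on the dual kinetic-energy form of the inequality via Rumin's momentum decomposition $-\Delta=\int_0^\infty f(s/p^2)^2\,\d s$, improves it in two ways --- an optimal choice of the profile $f$ (Lemma \ref{lem:min-Af}) and a low-momentum averaging with a weight $\varphi$ before the Cauchy--Schwarz step --- which reduces the constant to the explicit two-function quantity $\mathcal{C}_1$ of \eqref{eq:inf-Cfl}; explicit trial functions give $\mathcal{C}_1\le 0.373556$, hence $L_{1,1}/L_{1,1}^{\rm cl}\le 1.455786$. The operator-valued extension needed for the lifting is then almost free, because the kernel bounds hold eigenvalue-by-eigenvalue for the matrix $\gamma(x,x)$ (Propositions \ref{prop:gRuminlift} and \ref{prop:momentumop}); none of the operator-convexity machinery you anticipate (Araki--Lieb--Thirring, Hansen--Pedersen, Heinz) is required. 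If you want to salvage your route, you would need to exhibit the weighted Birman--Schwinger bound explicitly, carry out the optimization in closed form, and verify that its value is at most $1.456$ uniformly in the auxiliary Hilbert space; as written, that step is a conjecture, not an argument.
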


Our estimate is a significant improvement over \eqref{eq:best-known}, but in one-dimension is still about $26\%$ bigger than the expected value $L_{1,1}^{\rm So}/L_{1,1}^{\rm cl} = 2/\sqrt{3}=1.155...$ in \cite{LieThi-76}.

Historically, the Lieb--Thirring inequality was invented  to prove the stability of matter \cite{LieThi-75}. In this context, it can be stated as a lower bound on the fermionic kinetic energy,
\bq \label{eq:LT}
\Tr(-\Delta \gamma) \ge K_{d} \int_{\R^d} \gamma(x,x)^{1+\frac{2}{d}}\, \d x.
\eq
Here $\gamma$ is an arbitrary one-body density matrix on $L^2(\R^d)$, i.e. $0\le \gamma\le 1$ with $\Tr \gamma<\infty$, and $\gamma(x,x)$ is the diagonal part of the kernel of $\gamma$ (which can be defined properly by the spectral decomposition). By a standard duality argument, \eqref{eq:LT-sum-eigenvalue}  is equivalent to \eqref{eq:LT}, and the corresponding best constants are related by
\bq \label{eq:L-K}
K_d \left(1+\frac{2}{d}\right) = \left[ L_{1,d} \left(1+\frac{d}{2}\right) \right]^{-2/d}.
\eq
In particular, $K_d$ should be compared with the semiclassical constant 
$$K_d^{\rm cl}=\frac{(2\pi)^{2}}{|B_1|^{2/d}} \cdot \frac{d}{d+2},$$
which emerges naturally from the lowest kinetic energy of the Fermi gas in a finite volume.

In 2011, Rumin \cite{Rumin-11} found a direct proof of \eqref{eq:LT}, without using the dual form \eqref{eq:LT-sum-eigenvalue}. His method has  been used to derive several new estimates, e.g. a positive density analogue of \eqref{eq:LT} in \cite{FraLewLieSei-13}, 
and it will be also the starting point of our analysis. Note that Rumin's original proof \cite{Rumin-11} gives $K_d/K_d^{\rm cl}\ge d/(d+4)$, and hence 
\begin{align} \label{eq:Rumin-L-version}
L_{1,d}/L_{1,d}^{\rm cl}  \le \left[\frac{d+4}{d}\right]^{d/2},
\end{align}
namely $L_{1,1}/L_{1,1}^{\rm cl}  \le \sqrt{5}=2.236...$ when $d=1$ and and worse
estimates in higher dimensions. Therefore, new ideas are needed to push forward the bound.  


Our proof of Theorem \ref{thm:main} contains several main ingredients:

\begin{itemize}

\item First, we will modify Rumin's proof by introducing an {\em optimal momentum decomposition}. This gives $L_{1,1}/L_{1,1}^{\rm cl}  \le 1.618...$ in $d=1$, which is already an improvement over the best-known result \eqref{eq:best-known} in $d=1$.

\item Second, we use the Laptev--Weidl {\em lifting argument} to extend the bound $L_{1,d}/L_{1,d}^{\rm cl}  \le 1.618...$ to arbitrary dimension $d$, which is an improvement over the best-known result \eqref{eq:best-known}. The idea of lifting with respect to dimension is by now classical \cite{
LapWei-00,HunLapWei-00,DolLapLos-08}, but its combination with Rumin's method is not completely obvious and requires an improvement of the bound in \cite{Frank-14}.

\item Third, we take into account a {\em low momentum averaging}. This improves further the bound to $L_{1,1}/L_{1,1}^{\rm cl}  \le 1.456$ in $d=1$ (and worse estimates in higher dimensions). This is one of our key ideas and deviates substantially from Rumin's original argument.

\item Finally, we transfer the one-dimensional bound in the last step to higher dimensions by the {\em lifting argument} again. 
\end{itemize}

These steps will be explained in the next four sections. For the proof of Theorem \ref{thm:main} only the last two sections are relevant, but we feel that a slow presentation of the various new ideas might be useful.

\bigskip

As a by-product of our method we obtain Lieb--Thirring inequalities for fractional Schr\"odinger operators. The inequalities we are interested in have the form
\bq \label{eq:LT-sum-eigenvaluefrac}
\Tr[(-\Delta)^\sigma + V]_ - \le  L_{1,d,s} \int_{\R^d} V(x)_-^{1+\frac{d}{2\sigma}} \d x
\eq
and
\bq \label{eq:LTfrac}
\Tr((-\Delta)^\sigma \gamma) \ge K_{d,\sigma} \int_{\R^d} \gamma(x,x)^{1+\frac{2\sigma}{d}}\, \d x.
\eq
Again, a duality argument shows that the optimal constants in these two inequalities satisfy the relation
\bq \label{eq:L-Kfrac}
K_{d,\sigma} \left(1+\frac{2\sigma}{d}\right) = \left[ L_{1,d,\sigma} \left(1+\frac{d}{2\sigma}\right) \right]^{-\frac{2\sigma}{d}}.
\eq
Finally, the semi-classical constants are given by
\begin{equation}
\begin{split}
K_{d,\sigma}^{\rm cl} 
	& =  \frac{d}{d+2\sigma} \left( \frac{(2\pi)^d}{|B_1|} \right)^{\frac{2\sigma}{d}} \, , \\
L_{1,d,\sigma}^{\rm cl} 
	& = \frac{2\sigma}{d+2\sigma} 
		\frac{|B_1|}{(2\pi)^d}  \, . 
\end{split}
\end{equation}

The main ingredients of the proof of Theorem \ref{thm:main}, except the lifting argument, apply equally well to the fractional case. This gives

\begin{theorem} \label{thm:fractional} For all $d\ge 1$ and $\sigma>0$, the best constant in the Lieb--Thirring inequality \eqref{eq:LTfrac} satisfies
\begin{align*} 
K_{d,\sigma}/K_{d,\sigma}^{\rm cl}\ge  \max\left\{ \frac{d}{d+4\sigma} 
	\left[ \frac{(d+2\sigma)^2 \sin\left( \frac{2\pi\sigma}{d+2\sigma} \right)}{2\pi\sigma d} \right]^{1+\frac{2\sigma}{d}}, \frac{d}{d+2\sigma}  \left( \frac{2\sigma}{d+2\sigma}\right)^{\frac{4\sigma}{d}} \mathcal{C}_{d,\sigma}^{-\frac{2\sigma}{d}}\right\} 
\end{align*}
where
\begin{align} \label{eq:inf-Cfl-frac}
\mathcal{C}_{d,\sigma} &:= \inf \left\{ \left( \int_0^\infty \varphi^2 \right)^{\frac{d}{2\sigma}} \frac{d}{2\sigma}\int_0^\infty \frac{\Big(1-\int_0^\infty \varphi(s) f(st) \d s\Big)^2}{t^{1+\frac{d}{2\sigma}}}  \d t \right\} 
\end{align}
with the infimum taken over all functions $f,\varphi:\R_+\to \R_+$ satisfying $\int_0^\infty f^2=\int_0^\infty \varphi =1$. 

In particular, when $\sigma=1/2$ and $d=3$, we have 
$\mathcal{C}_{3,1/2} \le 0.046737$
and hence 
$$K_{3,1/2}/K_{3,1/2}^{\rm cl}\ge 0.826.$$
\end{theorem}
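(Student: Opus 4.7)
The plan is to implement, in the fractional setting, the two analytical refinements of Rumin's method that drive the proof of Theorem~\ref{thm:main}. Since no dimensional lifting is involved here, only the ``optimal momentum decomposition'' and the ``low momentum averaging'' steps need to be adapted, and both transfer naturally to $(-\Delta)^\sigma$ by replacing the kinetic symbol $|k|^2$ with $|k|^{2\sigma}$ throughout. The two claimed lower bounds on $K_{d,\sigma}/K_{d,\sigma}^{\rm cl}$ correspond to these two levels of refinement, and the numerical estimate at $d=3,\sigma=1/2$ is obtained by inserting a specific trial pair into the variational formula \eqref{eq:inf-Cfl-frac}.

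For the first bound, I would revisit Rumin's direct proof of \eqref{eq:LT} with $(-\Delta)^\sigma$ in place of $-\Delta$. For a density matrix $\gamma$ with $0\le\gamma\le 1$, write $\Tr((-\Delta)^\sigma\gamma)$ as an integral over the energy variable $t=|k|^{2\sigma}$ via a layer-cake decomposition, and introduce a test function $f:\R_+\to\R_+$ with $\|f\|_{L^2(\R_+)}=1$ playing the role of a smooth momentum cut-off. Pairing the decomposition against $(-\Delta)^\sigma$, applying Plancherel and H\"older, and optimizing the local truncation parameter pointwise in terms of $\gamma(x,x)$, one arrives at
\begin{equation*}
\Tr\bigl((-\Delta)^\sigma\gamma\bigr)\ge C(f)\int_{\R^d}\gamma(x,x)^{1+2\sigma/d}\,dx
\end{equation*}
with an explicit constant $C(f)$. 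Choosing $f$ as a normalized power function supported on a half-line and optimizing the exponent reduces the calculation to a Beta-type integral, which via the reflection identity $\Gamma(\beta)\Gamma(1-\beta)=\pi/\sin(\pi\beta)$ produces precisely the factor $\sin(2\pi\sigma/(d+2\sigma))$ appearing in the first bound. Converting via \eqref{eq:L-Kfrac} yields the first term of the maximum.

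For the second bound, I would refine the decomposition by averaging the momentum cut-off against a second weight $\varphi:\R_+\to\R_+$ with $\int_0^\infty\varphi=1$: instead of a single cut-off determined by $f$, one uses a continuous family of rescaled cut-offs mixed by $\varphi$, which is exactly the ``low momentum averaging'' idea listed in the introduction. The resulting constant $C(f,\varphi)$ depends jointly on both weights, and the supremum of $C(f,\varphi)$ over admissible pairs is precisely $\mathcal{C}_{d,\sigma}^{-1}$ as defined in \eqref{eq:inf-Cfl-frac}; another use of \eqref{eq:L-Kfrac} then gives the second term of the maximum. The numerical estimate $\mathcal{C}_{3,1/2}\le 0.046737$ is established by evaluating the functional in \eqref{eq:inf-Cfl-frac} at a carefully chosen trial pair $(f,\varphi)$, optimized numerically within a simple low-dimensional parametric family, and the lower bound $K_{3,1/2}/K_{3,1/2}^{\rm cl}\ge 0.826$ then follows by direct substitution into the second term of the maximum.

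The main technical obstacle is the rigorous setup of the joint $(f,\varphi)$-bound: the averaging against $\varphi$ must be introduced so that the pointwise optimization over the local truncation parameter remains scale-free and the constant matches \eqref{eq:inf-Cfl-frac} exactly, with no extraneous numerical factors. Once this is in place, the first bound is recovered either by a direct calculation with $f$ alone or as the concentrated limit $\varphi\to\delta$, and the numerical claim for $d=3,\sigma=1/2$ reduces to a low-dimensional numerical optimization.
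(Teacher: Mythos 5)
Your proposal follows essentially the same route as the paper: the fractional Rumin decomposition with a normalized profile $f$, a Cauchy--Schwarz/kernel bound, optimization of the resulting weighted functional $A_f^{(\sigma)}=\frac{d}{2\sigma}\int_0^\infty (1-f(t))^2 t^{-1-\frac{d}{2\sigma}}\,\d t$ over $\int_0^\infty f^2=1$ for the first bound, and the low momentum averaging against $\varphi$ (leading to $\mathcal{C}_{d,\sigma}$) plus a numerically optimized trial pair at $d=3$, $\sigma=1/2$ for the second bound and the value $0.826$.

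Two details, as you state them, would not go through. First, the claim that choosing $f$ as a normalized power function supported on a half-line and optimizing the exponent ``produces precisely'' the factor $\sin\bigl(\frac{2\pi\sigma}{d+2\sigma}\bigr)$ is incorrect: such a trial family gives elementary rational integrals and a strictly worse constant. The sharp value of the minimization comes from solving the Euler--Lagrange equation, whose unique minimizer is $f_*(t)=\bigl(1+\mu t^{\beta}\bigr)^{-1}$ with $\beta=1+\frac{d}{2\sigma}$ (this is Lemma \ref{lem:min-Af} in the paper); only then does the computation reduce to $\int_0^\infty u^\zeta(1+u)^{-2}\,du=\Gamma(1+\zeta)\Gamma(1-\zeta)$ and the reflection formula, yielding the stated sine factor. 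Second, the remark that the first bound can be recovered from the second in the concentrated limit $\varphi\to\delta$ fails, since $\int_0^\infty\varphi^2\to\infty$ in that limit and the averaged bound degenerates; the two bounds have genuinely different structures (integration over the scale $s$ versus a supremum over a single scale $E$ paid for by $\int\varphi^2$), which is why the theorem keeps both terms in the maximum. Neither point changes the overall strategy, which coincides with the paper's, but the first bound requires the exact minimizer rather than a power-function trial.
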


The proof of Theorem \ref{thm:fractional} is presented in the last section; see also Remark \ref{rem:liftingfrac} in Section \ref{sec:lifting1}.

For $\sigma=1$ and $d>1$, the bound from Theorem \ref{thm:fractional} is not as good as the lower bound in Theorem \ref{thm:main}. For all other cases, Theorem \ref{thm:fractional} yields the best known constants. In particular in the physically relevant case $\sigma=1/2$ and $d=3$, i.e., the ultra--relativistic Schr\"odinger operator in three dimensions, where 
 $K_{3,1/2}^{\rm cl} = \frac{3}{4} (6\pi^2)^{1/3} = 2.923...$, 
our result improves significantly the bounds 
$K_{3,1/2}/K_{3,1/2}^{\rm cl}\ge 0.6$ in \cite[p. 586]{Rumin-11} and 
$K_{3,1/2}K_{d,\sigma}^{\rm cl}\ge 0.558$ in \cite[Eq.(3.4)]{Daubechies-83}. 

An immediate consequence of Theorem \ref{thm:fractional}  is 
\begin{corollary} For every fixed $\sigma>0$, in the limit of large dimensions we have
	\begin{equation} \label{eq:L/Lcl-d-large}
		\limsup_{d\to \infty} L_{1,d,\sigma}/L_{1,d,\sigma}^{\rm cl} \le e.
	\end{equation}	
\end{corollary}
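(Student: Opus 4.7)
The plan is to combine the first term in the maximum of Theorem~\ref{thm:fractional} with the duality identity \eqref{eq:L-Kfrac} and perform an asymptotic expansion as $d\to\infty$ at fixed $\sigma>0$.

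First, from \eqref{eq:L-Kfrac} applied to both the true and the classical constants, a direct calculation using $K_{d,\sigma}^{\rm cl}(1+2\sigma/d) = ((2\pi)^d/|B_1|)^{2\sigma/d} = [L_{1,d,\sigma}^{\rm cl}(1+d/(2\sigma))]^{-2\sigma/d}$ yields the identity
\[
\frac{L_{1,d,\sigma}}{L_{1,d,\sigma}^{\rm cl}} = \left(\frac{K_{d,\sigma}^{\rm cl}}{K_{d,\sigma}}\right)^{d/(2\sigma)}.
\]
Thus \eqref{eq:L/Lcl-d-large} is equivalent to showing $\liminf_{d\to\infty}\frac{d}{2\sigma}\log(K_{d,\sigma}/K_{d,\sigma}^{\rm cl})\ge -1$.

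Second, I would discard the second term in the maximum of Theorem~\ref{thm:fractional} and work with the simpler lower bound
\[
\frac{K_{d,\sigma}}{K_{d,\sigma}^{\rm cl}} \ge \frac{d}{d+4\sigma}\,A_d^{1+2\sigma/d}, \qquad A_d := \frac{(d+2\sigma)^2 \sin\bigl(\frac{2\pi\sigma}{d+2\sigma}\bigr)}{2\pi\sigma d}.
\]
Writing $u_d := 2\pi\sigma/(d+2\sigma)\to 0$ and using $\sin u_d = u_d - u_d^3/6 + O(u_d^5)$, one finds $A_d = (1+2\sigma/d)\bigl(1+O(d^{-2})\bigr)$, so
\[
\Bigl(1+\frac{2\sigma}{d}\Bigr)\log A_d = \frac{2\sigma}{d} + O(d^{-2}).
\]
Combined with the elementary expansion $\log(d/(d+4\sigma)) = -4\sigma/d + O(d^{-2})$, this gives
\[
\log\frac{K_{d,\sigma}}{K_{d,\sigma}^{\rm cl}} \ge -\frac{2\sigma}{d} + O(d^{-2}),
\]
hence $\frac{d}{2\sigma}\log(K_{d,\sigma}/K_{d,\sigma}^{\rm cl}) \ge -1 + O(d^{-1})$. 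Passing to the limit $d\to\infty$ and exponentiating gives $(K_{d,\sigma}^{\rm cl}/K_{d,\sigma})^{d/(2\sigma)} \le e + o(1)$, which is the claim.

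There is no substantive obstacle; the proof reduces to a short Taylor expansion. The only point worth watching is the bookkeeping of the $1/d$ corrections: the leading $-4\sigma/d$ coming from the prefactor $d/(d+4\sigma)$ is only half-cancelled by the $+2\sigma/d$ coming from $(1+2\sigma/d)\log A_d$, leaving a net $-2\sigma/d$ whose product with the duality exponent $d/(2\sigma)$ equals exactly $-1$ and produces the constant $e$ upon exponentiation.
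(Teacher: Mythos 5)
Your argument is correct and follows essentially the same route as the paper: the duality identity $L_{1,d,\sigma}/L_{1,d,\sigma}^{\rm cl}=(K_{d,\sigma}^{\rm cl}/K_{d,\sigma})^{d/(2\sigma)}$ combined with the first lower bound of Theorem \ref{thm:fractional}, with your logarithmic Taylor expansion playing the role of the paper's observation that $(\sin t/t)^{1/t}\to 1$ as $t\to 0$. The bookkeeping of the $1/d$ terms (the $-4\sigma/d$ from $d/(d+4\sigma)$ against the $+2\sigma/d$ from $(1+2\sigma/d)\log A_d$) is exactly right and yields the constant $e$.
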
 
Indeed, from \eqref{eq:L-Kfrac} we have 
$ L_{1,d,\sigma}/L_{1,d,\sigma}^{\rm cl} = (K_{d,\sigma}^{\rm cl}/K_{d,\sigma})^{\frac{d}{2\sigma}}$.
So \eqref{eq:L/Lcl-d-large} follows from the first lower bound in Theorem \ref{thm:fractional} and the fact that $(\sin(t)/t)^{1/t}\to 1$ as $t\to 0$. Note that Rumin's original proof gives a bound similar to \eqref{eq:L/Lcl-d-large} but with $e$ replaced by $e^2$ (see \eqref{eq:Rumin-L-version}).

As a consequence of \eqref{eq:L/Lcl-d-large} , we also have
\begin{equation} \label{eq:K/Kcl-d-large}
	\lim_{d\to \infty} K_{d,\sigma}/K_{d,\sigma}^{\rm cl} =1.
	\end{equation}
	The lower bound $\liminf_{d\to\infty} K_{d,\sigma}/K_{d,\sigma}^{\rm cl} \ge 1$  
	follows from \eqref{eq:L/Lcl-d-large} and the upper bound 
	$K_{d,\sigma}/K_{d,\sigma}^{\rm cl}\le 1$ is 
	well-known, see \cite{Frank-14}.

%
%
%

Finally, we note that in 2013, Lundholm and Solovej \cite{LunSol-13} found another direct proof of the kinetic estimate \eqref{eq:LT}. Their approach is based on a local version of the exclusion principle, which is inspired by the first proof of the stability of matter by Dyson and Lenard \cite{DysLen-67}. Recently, the ideas in \cite{LunSol-13} have been developed further in \cite{Nam-18} to show that 
\bq \label{eq:LT-gradient}
\Tr(-\Delta \gamma) \ge (K_{d}^{\rm cl}-\eps) \int_{\R^d} \gamma(x,x)^{1+\frac{2}{d}}\, \d x - C_{d,\eps} \int_{\R^d} |\nabla \sqrt{\gamma(x,x)}|^2 \d x 
\eq
for all $d\ge 1$ and $\eps>0$ (the gradient error term is always smaller than the kinetic term \cite{HO-77}). Note that from \eqref{eq:LT-gradient}, as well as from all existing proofs of the Lieb--Thirring inequality (including the present paper), the real difference between dimensions is not visible. Therefore, new ideas are certainly needed to attack the full conjecture \eqref{eq:LT-conj}.

%
%
%
%
%
%

\subsection*{Acknowledgment.} We thank Sabine Boegli for helpful discussions. This work was partially supported by U.S. NSF grant DMS-1363432 (R.L.F.),  the Alfried Krupp von Bohlen und Halbach Foundation, and the Deutsche Forschungs\-gemein\-schaft 
 (DFG) through CRC 1173 (D.H.). 


\section{Optimal momentum decomposition} \label{sec:momentum-decomposition}

In this section, we use a modified version of Rumin's proof in \cite{Rumin-11} to prove

\begin{proposition} \label{prop:gRumin} For $d\ge 1$, the best constant in the Lieb--Thirring inequality \eqref{eq:LT} satisfies
$$
K_d/K_d^{\rm cl}\ge   \frac{d}{d+4} \left[ \frac{(d+2)^2\sin\left(\frac{2\pi}{d+2}\right)}{2\pi d} \right]^{1+\frac{2}{d}}.
$$
%

In particular, when $d=1$ we get $K_1/K_1^{\rm cl}\ge \frac{2187\sqrt{3}}{320 \pi^3} \ge  0.381777$ and $L_{1,1}/L_{1,1}^{\rm cl}\le 1.618435$. 
\end{proposition}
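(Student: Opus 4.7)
The plan is to generalize Rumin's direct argument by replacing the sharp spectral projector with a smooth, optimized momentum cutoff. For any $f:[0,\infty)\to[0,1]$ normalized so that $\int_0^\infty f(u)\,u^{-2}\,du = 1$, the identity $|k|^2 = \int_0^\infty f(|k|^2/s)\,ds$ yields the layer-cake representation
\begin{equation*}
\Tr(-\Delta\,\gamma) \;=\; \int_0^\infty \Tr\bigl(f(-\Delta/s)\,\gamma\bigr)\, ds,
\end{equation*}
with Rumin's original argument corresponding to the sharp cutoff $f = \mathds{1}_{[1,\infty)}$.

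For each $s>0$, set $f_1 := \sqrt{f(-\Delta/s)}$ and $f_2 := 1 - f_1$ (so $f_1 + f_2 = 1$ as operators), and expand $\rho_\gamma(x) = (f_1\gamma f_1)(x,x) + 2\,\Re(f_1\gamma f_2)(x,x) + (f_2\gamma f_2)(x,x)$. Combining the Cauchy--Schwarz bound $|(f_1\gamma f_2)(x,x)|^2 \le (f_1\gamma f_1)(x,x)\,(f_2\gamma f_2)(x,x)$ with the Bessel-type estimate $(f_2\gamma f_2)(x,x) \le (f_2^2)(x,x) = s^{d/2}\,D(f)$, which is valid because $0\le\gamma\le 1$, where
\begin{equation*}
D(f) \;:=\; (2\pi)^{-d}\int_{\R^d}\bigl(1-\sqrt{f(|u|^2)}\bigr)^2\,du,
\end{equation*}
yields the pointwise lower bound $(f_1\gamma f_1)(x,x) \ge \bigl(\sqrt{\rho_\gamma(x)} - \sqrt{s^{d/2}\,D(f)}\bigr)_+^2$. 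Since $\Tr(f_1\gamma f_1) = \Tr(f(-\Delta/s)\gamma)$, integrating in $x$ and then in $s$ (using the substitution $u = \sqrt{D(f)}\,s^{d/4}$ and a beta integral $B(4/d,3) = d^3/[4(d+2)(d+4)]$) reduces the resulting double integral to $\tfrac{d^2}{(d+2)(d+4)}\,D(f)^{-2/d}\int\rho_\gamma^{1+2/d}\,dx$, and hence
\begin{equation*}
K_d/K_d^{\rm cl} \;\ge\; \frac{d}{d+4}\,\bigl(\omega_d/D(f)\bigr)^{2/d}.
\end{equation*}
The choice $f = \mathds{1}_{[1,\infty)}$ gives $D(f) = \omega_d$ and recovers Rumin's bound $d/(d+4)$.

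The final step is to minimize $D(f)$ over the admissible class. A Lagrange multiplier computation produces the explicit extremizer
\begin{equation*}
\sqrt{f^*(v)} \;=\; \frac{v^{(d+2)/2}}{v^{(d+2)/2}+\sigma},
\end{equation*}
with $\sigma>0$ fixed by the normalization. Both the normalization and $D(f^*)$ reduce to beta integrals of the form $\int_0^\infty w^{a-1}(1+w)^{-2}\,dw$, and the relevant $\Gamma$-values in each case combine via the reflection identity $\Gamma(2/(d+2))\,\Gamma(d/(d+2)) = \pi/\sin(2\pi/(d+2))$ to give
\begin{equation*}
\frac{\omega_d}{D(f^*)} \;=\; \left[\frac{(d+2)^2\,\sin(2\pi/(d+2))}{2\pi d}\right]^{(d+2)/2}.
\end{equation*}
Since $(d+2)/d = 1+2/d$, this yields the claimed bound. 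The main technical obstacle is to identify and verify the closed-form extremizer $f^*$; the computation is made tractable by the scaling structure $f(-\Delta/s)$, which separates the $s$-integration cleanly from the $f$-optimization. Substituting $d=1$ with $\sin(2\pi/3) = \sqrt{3}/2$ gives $K_1/K_1^{\rm cl}\ge 2187\sqrt{3}/(320\pi^3) \ge 0.381777$, and the dual Lieb--Thirring bound $L_{1,1}/L_{1,1}^{\rm cl}\le 1.618\ldots$ follows from the duality relation \eqref{eq:L-K}.
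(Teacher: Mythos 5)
Your proposal is correct and is essentially the paper's own argument in a slightly different parametrization: your decomposition $|k|^2=\int_0^\infty f(|k|^2/s)\,ds$ with $f_1=\sqrt{f(-\Delta/s)}$, the pointwise Cauchy--Schwarz plus Bessel bound $(f_2\gamma f_2)(x,x)\le (f_2^2)(x,x)$, and the $s$-integration reproduce \eqref{eq:kinetic-representation}--\eqref{eq:Rumin-bound} after the change of variables $t\mapsto 1/t$, and your extremizer $\sqrt{f^*(v)}=v^{(d+2)/2}/(v^{(d+2)/2}+\sigma)$ is exactly the optimizer of Lemma \ref{lem:min-Af} with $\beta=1+d/2$, yielding the same constant. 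Note also that for the stated bound you only need $f^*$ as an admissible trial function (its optimality, which your Lagrange-multiplier step would require justifying, is not needed), so the argument is complete as a proof of the proposition.
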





\begin{proof} Let $\gamma$ be an operator on $L^2(\R^d)$ with $0\leq\gamma\leq 1$. By a density argument, it suffices to consider the case when $\gamma$ is a finite-rank operator with smooth eigenfunctions. For any function $f:\R_+\to \R_+$ with $\int_0^\infty f^2=1$, using the momentum decomposition
$$
-\Delta=p^2= \int_0^\infty f^2(s/p^2)\d s, \quad  p=-i\nabla,
$$ 
and Fubini's theorem we can write 
\begin{align}\label{eq:kinetic-representation}
\Tr(-\Delta \gamma)= \int_0^\infty \Tr [f(s/p^2)\gamma f(s/p^2)] \d s= \int_{\R^d} \left[ \int_0^\infty (f(s/p^2)\gamma f(s/p^2))(x,x) \d s \right] \d x.
\end{align}

Next, we estimate the kernel of $f(s/p^2)\gamma f(s/p^2)$.  Using Cauchy--Schwarz and $0\le \gamma\le 1$, for every $\eps>0$ we have the operator inequalities
\begin{align}\label{eq:CS-operator}
\gamma &\le (1+\eps) f(s/p^2)\gamma f(s/p^2) + (1+\eps^{-1})  (1-f(s/p^2)) \gamma (1-f(s/p^2)) \nn\\
&\le (1+\eps) f(s/p^2)\gamma f(s/p^2) + (1+\eps^{-1})  (1-f(s/p^2))^2 . 
\end{align}
This inequality implies for any $x\in\R^d$ the kernel bound
\begin{equation}
\label{eq:kernelbound}
\gamma(x,x) \leq (1+\eps) (f(s/p^2)\gamma f(s/p^2))(x,x) + (1+\eps^{-1})  (1-f(s/p^2))^2(x,x) .
\end{equation}
Optimizing over $\eps>0$ we obtain
\begin{align} \label{eq:CS-kernel}
\sqrt{\gamma(x,x)} \le \sqrt{(f(s/p^2)\gamma f(s/p^2))(x,x)} + \sqrt{(1-f(s/p^2))^2 (x,x)}. 
\end{align}
Moreover, it is straightforward to see that 
\begin{align} \label{eq:rho-}
(1-f(s/p^2))^2 (x,x) = \frac{1}{(2\pi)^{d}} \int_{\R^d} (1-f(s/k^2))^2 \d k = s^{\frac{d}{2}}  \frac{|B_1|}{(2\pi)^d} A_f
\end{align}
where 
\begin{align} \label{eq:Af}
A_f:=  \frac{d}{2}\int_0^\infty \frac{(1-f(t))^2}{t^{1+\frac{d}{2}}}  \d t.
\end{align}
Consequently, we deduce from \eqref{eq:CS-kernel} that
\begin{align} \label{eq:bound-Af}
(f(s/p^2)\gamma f(s/p^2))(x,x)  \ge  \left[\sqrt{\gamma(x,x)}-\sqrt{s^{\frac{d}{2}}  \frac{|B_1|}{(2\pi)^d} A_f} \, \right]_+^2.
\end{align}
Next,  inserting \eqref{eq:bound-Af} into \eqref{eq:kinetic-representation} and integrating over $s>0$ lead to
\begin{align} \label{eq:Rumin-bound}
\Tr(-\Delta \gamma) \ge \left( \int_{\R^d}  \gamma(x,x)^{1+\frac{2}{d}} \d x \right) \left( \frac{|B_1|}{(2\pi)^d} A_f\right)^{-\frac{2}{d}} \frac{d^2}{(d+2)(d+4)} .
\end{align}
Thus,
\begin{align} \label{eq:Rumin-bound-K}
K_d/K_d^{\rm cl} \ge \frac{d}{d+4}  \left( A_f\right)^{-\frac{2}{d}}. 
\end{align}

Finally, it remains to minimize $A_f$ under the constraint $\int_0^\infty f^2 =1$. We note that the proof in \cite{Rumin-11} corresponds to $f(t)=\1(t\le 1)$ (although the representation there is rather different), which gives $A_f=1$ but this is not optimal. From Lemma \ref{lem:min-Af} below we have 
$$
\inf_f A_f = \left[ \frac{d}{d+2} \frac{\frac{2\pi}{d+2}}{\sin\left(\frac{2\pi}{d+2}\right)}\right]^{1+\frac{d}{2}}. 
$$
Inserting this into \eqref{eq:Rumin-bound-K} we conclude the proof of Proposition \ref{prop:gRumin}.
\end{proof}

In the previous proof we needed the following solution of a minimization problem.

\begin{lemma} \label{lem:min-Af}
For any constant $\beta>1$,
$$
\inf\left\{ \int_0^\infty (1-f(t))^2 \,t^{-\beta}\,dt :\ f:\R_+\to \R_+, \int_0^\infty f^2\,dt = 1 \right\} 
	= \frac{(\beta-1)^{\beta-1}}{\beta^\beta}
	\left(\frac{\pi/\beta}{\sin(\pi/\beta)}
	\right)^\beta
$$
and equality is achieved if and only if
$$
f(t) = \frac{1}{1+\mu t^\beta}
\qquad\text{with}\qquad 
\mu = \left[\frac{\beta-1}{\beta}\cdot \frac{\pi/\beta}{\sin\left(\pi/\beta\right)}\right]^\beta.
$$
\end{lemma}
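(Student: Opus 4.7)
The plan is to locate the minimizer via a formal Euler--Lagrange computation and then verify optimality by a direct quadratic expansion, bypassing any appeal to abstract existence theorems. Introducing a Lagrange multiplier $\mu > 0$ for the constraint $\int f^2 = 1$, the pointwise stationarity condition reads $(1-f(t))\, t^{-\beta} = \mu f(t)$, whose unique nonnegative solution is the candidate
\begin{equation*}
f^*(t) = \frac{1}{1+\mu t^\beta}.
\end{equation*}

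Next I would determine $\mu$ from the normalization $\int_0^\infty (f^*)^2\, dt = 1$. The substitution $u = \mu t^\beta$ reduces this to a Beta integral,
\begin{equation*}
\int_0^\infty \frac{dt}{(1+\mu t^\beta)^2} = \frac{1}{\beta \mu^{1/\beta}}\, B\!\left(\tfrac{1}{\beta},\, 2-\tfrac{1}{\beta}\right) = \frac{1}{\beta \mu^{1/\beta}} \cdot \frac{\beta-1}{\beta} \cdot \frac{\pi}{\sin(\pi/\beta)},
\end{equation*}
where the last equality uses $\Gamma(2-1/\beta) = (1-1/\beta)\Gamma(1-1/\beta)$ together with the reflection formula $\Gamma(s)\Gamma(1-s) = \pi/\sin(\pi s)$ (the hypothesis $\beta > 1$ ensures all integrals and the Beta arguments are admissible). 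Solving for $\mu$ recovers precisely the value in the statement.

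The crucial step is showing $f^*$ is a genuine minimizer, not merely a critical point. For any admissible $f$ I may assume the objective is finite, otherwise there is nothing to prove; the bound $(f-f^*)^2 \le 2((1-f)^2 + (1-f^*)^2)$ then makes every term below integrable against $t^{-\beta}$, so I can expand
\begin{align*}
\int_0^\infty (1-f)^2 t^{-\beta} dt = \int_0^\infty (1-f^*)^2 t^{-\beta} dt + 2 \int_0^\infty (1-f^*)(f^*-f) t^{-\beta} dt + \int_0^\infty (f-f^*)^2 t^{-\beta} dt.
\end{align*}
Applying the Euler--Lagrange identity $(1-f^*)\, t^{-\beta} = \mu f^*$ together with $\int (f^*)^2 = \int f^2 = 1$, the cross term becomes $2\mu \int f^*(f^*-f)\, dt = \mu \int (f-f^*)^2\, dt$, so the right side equals
\begin{equation*}
\int_0^\infty (1-f^*)^2 t^{-\beta} dt + \mu \int_0^\infty (f-f^*)^2 dt + \int_0^\infty (f-f^*)^2 t^{-\beta} dt \ge \int_0^\infty (1-f^*)^2 t^{-\beta} dt,
\end{equation*}
with equality iff $f = f^*$ almost everywhere, since $\mu > 0$.

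It then remains to compute the minimum value. Writing $1-f^* = \mu t^\beta/(1+\mu t^\beta)$ and substituting $u = \mu t^\beta$ gives
\begin{equation*}
\int_0^\infty (1-f^*)^2 t^{-\beta} dt = \frac{\mu^{1-1/\beta}}{\beta^2} \cdot \frac{\pi}{\sin(\pi/\beta)},
\end{equation*}
via the Beta evaluation $\int_0^\infty u^{1/\beta}(1+u)^{-2}\, du = \Gamma(1+1/\beta)\Gamma(1-1/\beta) = (\pi/\beta)/\sin(\pi/\beta)$. Inserting the value of $\mu$ from the second step produces the claimed closed form. I expect no serious obstacle: the quadratic expansion does all the real work, reducing the lemma to two standard $\Gamma$-function identities.
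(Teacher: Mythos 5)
Your proposal is correct and follows essentially the same route as the paper's own ``direct solution'': the same candidate $f^*(t)=(1+\mu t^\beta)^{-1}$, the same quadratic expansion using the pointwise identity $(1-f^*)\,t^{-\beta}=\mu f^*$ together with $\int_0^\infty (f^*)^2=\int_0^\infty f^2=1$ to convert the cross term into $\mu\int_0^\infty (f-f^*)^2$, and the same Gamma/Beta evaluations for $\mu$ and the minimum value. The only cosmetic difference is that you fix $\mu$ directly from the normalization, while the paper writes the optimizer as $(1+(\mu_* t)^\beta)^{-1}$ with $\mu_*=\int_0^\infty (1+t^\beta)^{-2}\,dt$; these coincide.
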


\begin{proof} Heuristically, the optimizer can be found by solving the Euler--Lagrange equation, but to make this rigorous one would have to prove that a minimizer exists. 
 This can be easily done by setting $h(t)= (1-f(t))t^{-\beta/2} $, so the minimization problem is equivalent to 
$$
\inf\left\{\int_0^\infty h(t)^2\, dt :\, h\in \partial C \right\}
$$
where $\partial C = \{h:\R_+\to\R ,\, \int_0^\infty (1-t^{\beta/2}h(t))^2\, dt =1\}$ is the boundary of the strictly convex set $C= \{h:\R_+\to\R ,\, \int_0^\infty (1-t^{\beta/2}h(t))^2\, dt \le 1\}$. Since $C$ is closed, which follows easily from Fatou's lemma, and does not contain the zero function, it contains a functions $h_*$ of minimal length. Necessarily $h_*\in \partial C$, otherwise $h_*$ would be in the interior of $C$ and we could shrink it, thus reducing its length a little bit, which is impossible. 
So $h_*(t)= (1-f_*(t))t^{-\beta/2} $ has minimal $L^2$ norm under all $f$ 
with $\int_0^\infty f(t^2\, dt= \int_0^\infty (1-t^{\beta/2}h(t)^2\, dt=1$. 
Hence $f_*$  is a minimizer which must obey the Euler--Lagrange equation. 

A more direct solution is as follows:  
Let $f_*(t) = (1+(\mu_*t)^\beta)^{-1}$ with
$$
\mu_* =  \int_0^\infty \frac{\d t}{(1+t^\beta)^2} \,,
$$
so that $t^{-\beta} (1-f_*(t))=\mu_*^\beta f_*(t)$ and 
$$
\int_0^\infty f_*(t)^2\,\d t = \int_0^\infty \frac{\d t}{(1+\mu_*t^\beta)^2} =  \mu_*^{-1}\int_0^\infty \frac{\d t}{(1+t^\beta)^2} = 1\,.
$$
We see that for any $f:\R_+\to \R_+$ with $\int_0^\infty f(t)^2\,\d t =1$,
\begin{align*}
& \int_0^\infty t^{-\beta} (1-f(t))^2 \,\d t - \int_0^\infty t^{-\beta} (1-f_*(t))^2 \,\d t \\
& = 2 \int_0^\infty t^{-\beta} (1-f_*(t))(f_*(t)-f(t))\,\d t + \int_0^\infty t^{-\beta} (f(t)-f_*(t))^2 \,\d t \\
& = 2 \mu_*^\beta \int_0^\infty  f_*(t)(f_*(t)-f(t))\,\d t + \int_0^\infty t^{-\beta} (f(t)-f_*(t))^2\,\d t \\
& =  \mu_*^\beta \int_0^\infty  (f_*(t)-f(t))^2 \,\d t + \int_0^\infty t^{-\beta} (f(t)-f_*(t))^2 \,\d t \geq 0 \,.
\end{align*}
Here we used $t^{-\beta} (1-f_*(t))=\mu_*^\beta f_*(t)$ in the third identity and  $\int_0^\infty f_*^2= \int_0^\infty f^2 = \tfrac{1}{2}\int f_*^2 + \tfrac{1}{2}\int_0^\infty f^2$ in the last one. This shows that the infimum is attained if and only if $f=f_*$. 

It remains to compute the infimum and $\mu_*$. 
Both follow from the formula \cite[Abramowitz--Stegun, 6.2.1 and 6.2.2]{AbrSte-64}
$$ 
\int_0^\infty \frac{u^\zeta}{(1+u)^2}\,du = \Gamma(1+\zeta)\, \Gamma(1-\zeta)
\qquad \text{if}\quad
-1<\Re \zeta<1 \,.
$$
Alternatively one can use a keyhole type contour encircling the positive real axis 
and the residue theorem,  see  \cite[Section 11.1.III]{BN}, to directly evaluate 
$
\int_0^\infty \frac{u^\zeta}{(1+u)^2}\,du 
$. 

Letting $u=t^\beta$, we have
$$
 \mu_* = \int_0^\infty \frac{\d t}{(1+t^\beta)^2} = \frac{1}{\beta} \int_0^\infty \frac{u^{1/\beta-1}\,du}{(1+u)^2} =  \frac{\Gamma(1/\beta)\, \Gamma(2-1/\beta)}{\beta} 
$$
The functional equations $\Gamma(1+z)= z\Gamma(z)$ and $\Gamma(z)\Gamma(1-z)= \frac{\pi}{\sin(\pi z)}$, the last one again valid for $-1<\Re z <1$, yield 
$$
\mu_* = \frac{1}{\beta}\Big(1-\frac{1}{\beta}\Big)\Gamma(1/\beta)\Gamma(1-1/\beta) = \Big(1-\frac{1}{\beta}\Big)\frac{\pi/\beta}{\sin\left(\pi/\beta\right)}
$$
Moreover,
\begin{align*}
\int_0^\infty (1-f_*(t))^2 t^{-\beta}\,\d t = \mu_*^\beta  \int_0^\infty \frac{(\mu_*t)^\beta\,\d t}{(1+\mu_* t^\beta)^2} = \mu_*^{\beta-1} \int_0^\infty \frac{t^\beta \,\d ts}{(1+t^\beta)^2}
\end{align*}
and
\begin{align*}
\int_0^\infty \frac{t^\beta \,\d t}{(1+t^\beta)^2} 
	& = \frac{1}{\beta} \int_0^\infty \frac{u^{1/\beta}\,du}{(1+u)^2} = \frac{\Gamma(1+1/\beta)\,\Gamma(1-1/\beta)}{\beta} 
		= \frac{\Gamma(1/\beta)\,\Gamma(1-1/\beta)}{\beta^2} \\
	& = \frac{1}{\beta} \frac{\pi/\beta}{\sin\left( \pi/\beta \right)}.
\end{align*}
This proves the claimed formula.

\end{proof}


\section{Lifting to higher dimensions. I}\label{sec:lifting1}

In dimension $d=1$ Proposition \ref{prop:gRumin} yields $L_{1,1}/L_{1,1}^{\rm cl}\leq 1.618435$, which is better than for instance the bound in dimension $d=3$, namely $L_{1,3}/L_{1,3}^{\rm cl}\leq 1.994584$. In this section we use a procedure of Laptev and Weidl \cite{Laptev-97,LapWei-00} to show that the higher-dimensional fraction $L_{1,d}/L_{1,d}^{\rm cl}$ is at least as good as the low-dimensional one.

The idea is to consider potentials $V$ on $\R^d$ that take values in the self-adjoint operators on some separable Hilbert space $\mathcal H$. We are looking for an inequality of the form
\bq \label{eq:LT-sum-eigenvalueop}
\Tr[-\Delta + V]_ - \le  L_{1,d}^{\rm op} \int_{\R^d} \tr \left( V(x)_-^{1+\frac{d}{2}}\right) \d x \,,
\eq
where $\tr$ denotes the trace in $\mathcal H$, $\Tr$ the trace in $L^2(\R^d;\mathcal{H})= L^2(\R^d)\otimes\mathcal{H}$, the operator $-\Delta$ is interpreted as $-\Delta\otimes\1_\mathcal{H}$, and where, by definition, the constant $L_{1,d}^{\rm op}$ is independent of $\mathcal H$. Taking $\mathcal H$ one-dimensional we see that \eqref{eq:LT-sum-eigenvalueop} coincides with \eqref{eq:LT-sum-eigenvalue} and therefore 
\bq \label{eq:L<=Lop}
L_{1,d}\leq L_{1,d}^{\rm op}\,.
\eq
It is not known whether $L_{1,d}$ and $L_{1,d}^{\rm op}$ coincide, but in this section we will show that the upper bound on $L_{1,d}$ from Proposition \ref{prop:gRumin} is, in fact, also an upper bound on $L_{1,d}^{\rm op}$.

We show this by using the classical duality argument. This shows the analogue of \eqref{eq:L-K}, that is,
\bq \label{eq:L-Kop}
K_d^{\rm op} \left(1+\frac{2}{d}\right) = \left[ L_{1,d}^{\rm op} \left(1+\frac{d}{2}\right) \right]^{-2/d} \,,
\eq
where $K_d^{\rm op}$ denote the best constant in the inequality
\bq \label{eq:LTop}
\Tr(-\Delta \gamma) \ge K_{d}^{\rm op} \int_{\R^d} \tr \left(\gamma(x,x)^{1+\frac{2}{d}}\right) \d x.
\eq
for all operators $\gamma$ on $L^2(\R^d;\mathcal H)$ satisfying $0\leq\gamma\leq 1$, where $\mathcal H$ is an arbitrary (separable) Hilbert space. For such $\gamma$, one can consider $\gamma(x,x)$ as a non-negative operator in $\mathcal H$.

The following proof improves an argument from \cite{Frank-14}.

\begin{proposition} \label{prop:gRuminlift}
For $d\ge 1$, the best constant in the Lieb--Thirring inequality \eqref{eq:LTop} satisfies
$$
K_d^{\rm op}/K_d^{\rm cl}\ge   \frac{d}{d+4} \left[ \frac{(d+2)^2\sin\left(\frac{2\pi}{d+2}\right)}{2\pi d} \right]^{1+\frac{2}{d}}.
$$
In particular, when $d=1$ we get $K_1^{\rm op}/K_1^{\rm cl}\ge 0.381777$ and $L_{1,d}^{\rm op}/L_{1,d}^{\rm cl}\le 1.618435$.
\end{proposition}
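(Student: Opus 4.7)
The plan is to extend the proof of Proposition \ref{prop:gRumin} to the operator-valued setting, where the only step that genuinely relies on the scalar nature of $\gamma(x,x)$ is the pointwise optimization of $\eps$ that leads to \eqref{eq:CS-kernel}. By the duality relation \eqref{eq:L-Kop}, it suffices to establish the operator-valued kinetic inequality \eqref{eq:LTop}. By a density argument one may again assume that $\gamma$ is a finite-rank operator on $L^2(\R^d;\mathcal H)=L^2(\R^d)\otimes\mathcal H$ with $0\le\gamma\le 1$, so that $\gamma(x,x)$ is a finite-rank non-negative operator on $\mathcal H$ for every $x$.

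The momentum decomposition identity \eqref{eq:kinetic-representation}, now with $f(s/p^2)$ acting as $f(s/p^2)\otimes\1_{\mathcal H}$, generalizes to
\begin{equation*}
\Tr(-\Delta\gamma) = \int_{\R^d}\int_0^\infty \tr_{\mathcal H}\bigl((f(s/p^2)\gamma f(s/p^2))(x,x)\bigr)\,\d s\,\d x,
\end{equation*}
and the operator Cauchy--Schwarz bound \eqref{eq:CS-operator} remains valid. Since $(1-f(s/p^2))^2$ is a scalar Fourier multiplier tensored with $\1_{\mathcal H}$, its diagonal kernel equals $c_s\,\1_{\mathcal H}$ with $c_s := s^{d/2}\frac{|B_1|}{(2\pi)^d} A_f$, by \eqref{eq:rho-}. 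Taking the $(x,x)$-entry of \eqref{eq:CS-operator} thus yields the operator inequality on $\mathcal H$
\begin{equation*}
\gamma(x,x) \le (1+\eps)(f(s/p^2)\gamma f(s/p^2))(x,x) + (1+\eps^{-1})\,c_s\,\1_{\mathcal H},\qquad \eps>0.
\end{equation*}

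The key step, and the source of the improvement over \cite{Frank-14}, is to optimize $\eps$ spectrally on $\gamma(x,x)$ rather than uniformly at the operator level. I would diagonalize $\gamma(x,x)=\sum_i \lambda_i|e_i\rangle\langle e_i|$ at each point $x$ and test the above operator inequality against each $e_i$; this reduces it to a scalar inequality between $\lambda_i$ and $\langle e_i, X_s e_i\rangle$ with $X_s:=(f(s/p^2)\gamma f(s/p^2))(x,x)$, and choosing $\eps>0$ optimally depending on both $s$ and $i$ yields $\langle e_i,X_s e_i\rangle \ge (\sqrt{\lambda_i}-\sqrt{c_s})_+^2$. Completing $\{e_i\}$ to an orthonormal basis of $\mathcal H$ and using $X_s\ge 0$, summation over $i$ produces $\tr_{\mathcal H}(X_s) \ge \sum_i (\sqrt{\lambda_i}-\sqrt{c_s})_+^2$. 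Integrating over $s>0$ then applies the scalar one-variable integral already computed in the proof of Proposition \ref{prop:gRumin} to each $\lambda_i$ separately, giving $\tfrac{d^2}{(d+2)(d+4)}\bigl(\tfrac{|B_1|}{(2\pi)^d}A_f\bigr)^{-2/d}\tr_{\mathcal H}(\gamma(x,x)^{1+2/d})$. Integrating in $x$ and minimizing $A_f$ via Lemma \ref{lem:min-Af} then reproduces exactly the same constant as in Proposition \ref{prop:gRumin}, which, combined with \eqref{eq:L-Kop}, yields the claimed bound on $L_{1,d}^{\rm op}/L_{1,d}^{\rm cl}$.
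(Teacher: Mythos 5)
Your proposal is correct and follows essentially the same strategy as the paper: keep the Cauchy--Schwarz bound \eqref{eq:CS-operator} as an operator inequality in $\mathcal H$ at each point $x$, postpone the optimization over $\eps$ until after a spectral reduction of $\gamma(x,x)$, apply the scalar $s$-integral eigenvalue-by-eigenvalue, and resum to get $\tr(\gamma(x,x)^{1+2/d})$. The only (harmless) difference is that you test against the eigenvectors of $\gamma(x,x)$ and bound diagonal matrix elements of $(f(s/p^2)\gamma f(s/p^2))(x,x)$, whereas the paper compares the ordered eigenvalues $\lambda_n$ of the two operators via the variational principle; both yield the same constant.
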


\begin{proof}
Let $\gamma$ be an operator on $L^2(\R^d;\mathcal H)$ with $0\leq\gamma\leq 1$. By a density argument we may assume that $\mathcal H$ is finite-dimensional and that $\gamma$ is finite rank and with smooth eigenfunctions. The analogue of \eqref{eq:kinetic-representation} is
\begin{align}\label{eq:kinetic-representationop}
\Tr(-\Delta \gamma)= \int_{\R^d} \tr \left[ \int_0^\infty (f(s/p^2)\gamma f(s/p^2))(x,x) \d s \right] \d x
\end{align}
for any $f:\R_+\to \R_+$ with $\int_0^\infty f^2=1$. The operator inequality \eqref{eq:CS-operator} implies that for any $x\in\R^d$ one has \eqref{eq:kernelbound}, understood as an operator inequality in $\mathcal H$. Denoting by $\lambda_n(T)$ the $n$-th eigenvalue, in decreasing order and taking multiplicities into account, of a non-negative operator $T$, we infer from \eqref{eq:kernelbound}, the variational principle and the computation \eqref{eq:rho-} that for any $n\in\N$,
$$
\lambda_n(\gamma(x,x)) \leq (1+\eps) \lambda_n((f(s/p^2)\gamma f(s/p^2))(x,x)) + (1+\eps^{-1})  s^{\frac{d}{2}}  \frac{|B_1|}{(2\pi)^d} A_f .
$$
At this stage we can optimize over $\eps>0$ and obtain
\begin{align} \label{eq:CS-kernelop}
\sqrt{\lambda_n(\gamma(x,x))} \le \sqrt{\lambda_n((f(s/p^2)\gamma f(s/p^2))(x,x))} + \sqrt{(1-f(s/p^2))^2 (x,x)}. 
\end{align}
Thus,
\begin{align} \label{eq:bound-Afop}
\lambda_n((f(s/p^2)\gamma f(s/p^2))(x,x))  \ge  \left[\sqrt{\lambda_n(\gamma(x,x))}-\sqrt{s^{\frac{d}{2}}  \frac{|B_1|}{(2\pi)^d} A_f} \, \right]_+^2.
\end{align}
For fixed $n$ (and $x$) we obtain after integration over $s$,
\begin{align*}
\int_0^\infty \lambda_n((f(s/p^2)\gamma f(s/p^2))(x,x))\,ds 
\geq \lambda_n(\gamma(x,x))^{1+\frac{2}{d}} \left( \frac{|B_1|}{(2\pi)^d} A_f\right)^{-\frac{2}{d}} \frac{d^2}{(d+2)(d+4)} .
\end{align*}
Summing over $n$ and integrating with respect to $x$ we obtain by \eqref{eq:kinetic-representationop}
\begin{align*}
\Tr(-\Delta \gamma) & \geq \int_{\R^d} \sum_n \int_0^\infty \lambda_n((f(s/p^2)\gamma f(s/p^2))(x,x))\,ds \\
& \geq \int_{\R^d} \tr \left( \gamma(x,x)^{1+\frac{2}{d}}\right) \d x \left( \frac{|B_1|}{(2\pi)^d} A_f\right)^{-\frac{2}{d}} \frac{d^2}{(d+2)(d+4)} .
\end{align*}
The proposition now follows in the same way as Proposition \ref{prop:gRumin}.
\end{proof}

\begin{remark}\label{rem:liftingfrac}
The same proof yields the operator-valued analogue of Theorem \ref{thm:fractional}. Since there seems to be no analogue of the following proposition for $(-\Delta)^\sigma$ with $\sigma\neq 1$, we do not write this out.
\end{remark}

In order to obtain good constants in higher dimensions we recall the following bound which is essentially due to Laptev and Weidl \cite{LapWei-00}. The extension to $d_1\geq 2$, which is not needed here, but is interesting in its own right, is due to \cite{Hundertmark-02}.

\begin{proposition}\label{laptevweidl}
For any integers $1\leq d_1<d$,
$$
L_{1,d}^{\rm op}/L_{1,d}^{\rm cl} \leq L_{1,d_1}^{\rm op} / L_{1,d_1}^{\rm cl} \,.
$$
\end{proposition}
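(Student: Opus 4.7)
The plan is to realize $\R^d=\R^{d_1}\times \R^{d_2}$ with $d_2:=d-d_1$ and view $-\Delta+V$ on $L^2(\R^d;\mathcal H)$ as a one-body Schr\"odinger operator on $\R^{d_1}$ whose potential absorbs the kinetic energy in the complementary variables and takes values in the self-adjoint operators on the auxiliary Hilbert space $\mathcal H':=L^2(\R^{d_2};\mathcal H)$. Writing $x=(x_1,x_2)$ and $p_j=-i\nabla_{x_j}$, the natural identification $L^2(\R^d;\mathcal H)\cong L^2(\R^{d_1};\mathcal H')$ intertwines
$$
-\Delta+V \;=\; p_1^2\otimes 1_{\mathcal H'} + W(x_1),\qquad W(x_1):= p_2^2 + V(x_1,\cdot),
$$
where $W(x_1)$ is self-adjoint on $\mathcal H'$ for (a.e.) $x_1$. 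This rewriting is the whole point: the dimension-independence of $L_{1,d_1}^{\rm op}$ (emphasized after \eqref{eq:LT-sum-eigenvalueop}) is exactly what allows us to apply the operator-valued Lieb--Thirring bound in dimension $d_1$ to a potential whose fiber Hilbert space $\mathcal H'$ is itself infinite-dimensional.

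Applying that inequality in the Hilbert space $\mathcal H'$ yields
$$
\Tr[-\Delta+V]_- \;\le\; L_{1,d_1}^{\rm op}\int_{\R^{d_1}} \tr_{\mathcal H'}\!\bigl(W(x_1)_-^{1+d_1/2}\bigr)\,dx_1.
$$
For each fixed $x_1$ the integrand is itself a higher-moment Lieb--Thirring trace for the operator-valued Schr\"odinger operator $p_2^2+V(x_1,\cdot)$ on $L^2(\R^{d_2};\mathcal H)$, at the Riesz exponent $\gamma:=1+d_1/2\ge 3/2$. Here I would invoke the sharp operator-valued bound of Laptev--Weidl \cite{LapWei-00} at $\gamma=3/2$, extended to all $\gamma\ge 3/2$ by the standard Aizenman--Lieb monotonicity argument: $L_{\gamma,d_2}^{\rm op}=L_{\gamma,d_2}^{\rm cl}$. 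With $\gamma=1+d_1/2$ and $d_2=d-d_1$ this gives
$$
\tr_{\mathcal H'}\!\bigl(W(x_1)_-^{1+d_1/2}\bigr) \;\le\; L_{1+d_1/2,\,d_2}^{\rm cl}\int_{\R^{d_2}} \tr_{\mathcal H}\!\bigl(V(x_1,x_2)_-^{1+d/2}\bigr)\,dx_2,
$$
and Fubini then produces the sandwich bound $L_{1,d}^{\rm op}\le L_{1,d_1}^{\rm op}\cdot L_{1+d_1/2,\,d_2}^{\rm cl}$.

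The claimed inequality then reduces to the multiplicative identity $L_{1+d_1/2,\,d_2}^{\rm cl}=L_{1,d}^{\rm cl}/L_{1,d_1}^{\rm cl}$ for the semiclassical constants, which is a two-line Gamma-function check from $L_{\gamma,d}^{\rm cl}=\Gamma(\gamma+1)/[(4\pi)^{d/2}\Gamma(\gamma+d/2+1)]$: the $\Gamma(d_1/2+2)$ factors in the product telescope and one recovers $L_{1,d}^{\rm cl}$ exactly.

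The main obstacle is entirely outsourced: the nontrivial ingredient is Laptev--Weidl's sharp operator-valued $L^{3/2}$-bound, without which one could not split off a purely semiclassical factor in the second step and the whole lifting scheme would leak a constant $>1$. Given that input, the argument is a clean Fubini-type decomposition, and the generality $d_1\ge 1$ (rather than only $d_1=1$ plus iteration) comes from the fact that Proposition \ref{prop:gRuminlift} already furnishes a finite $L_{1,d_1}^{\rm op}$ in every intermediate dimension, so a single one-shot lift suffices.
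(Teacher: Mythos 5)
Your proposal is correct and follows essentially the same route as the paper's proof: split $\R^d=\R^{d_1}\times\R^{d_2}$, absorb $-\Delta_2+V(x_1,\cdot)$ into an operator-valued potential on $L^2(\R^{d_2};\mathcal H)$, apply the operator-valued bound in dimension $d_1$, then the sharp Laptev--Weidl result at exponent $1+d_1/2\ge 3/2$, and finish with the semiclassical identity $L_{1,d_1}^{\rm cl}L_{1+d_1/2,d_2}^{\rm cl}=L_{1,d}^{\rm cl}$. The only cosmetic differences are that you verify that identity by a direct Gamma-function computation (the paper cites \cite{Hundertmark-02}) and that you invoke Aizenman--Lieb monotonicity, whereas \cite{LapWei-00} is already stated for all $\alpha\ge 3/2$.
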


In particular, taking $d_1=1$ and using the bound from Proposition \ref{prop:gRuminlift} together with \eqref{eq:L<=Lop} we obtain the following bound.

\begin{corollary}
For any $d\geq 1$, $L_{1,d}/L_{1,d}^{\rm cl} \le L_{1,d}^{\rm op}/L_{1,d}^{\rm cl}\leq 1.618435$.
\end{corollary}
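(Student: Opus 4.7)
The plan is essentially to chain together three facts already available in the excerpt, so the proof will be very short. First, I would invoke \eqref{eq:L<=Lop}, which is immediate from taking $\mathcal{H}$ one-dimensional in \eqref{eq:LT-sum-eigenvalueop}; this gives the first inequality $L_{1,d}/L_{1,d}^{\rm cl} \le L_{1,d}^{\rm op}/L_{1,d}^{\rm cl}$ for every $d\ge 1$.

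Next, I would apply Proposition \ref{laptevweidl} with $d_1=1$, which yields
\begin{equation*}
L_{1,d}^{\rm op}/L_{1,d}^{\rm cl} \le L_{1,1}^{\rm op}/L_{1,1}^{\rm cl}
\end{equation*}
for all $d\ge 1$ (the case $d=1$ being trivial). Finally, the one-dimensional operator-valued bound from Proposition \ref{prop:gRuminlift} gives $L_{1,1}^{\rm op}/L_{1,1}^{\rm cl}\le 1.618435$. Concatenating these three inequalities finishes the proof.

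There is no real obstacle here: the corollary is a direct consequence of combining the Laptev--Weidl lifting (Proposition \ref{laptevweidl}), which upgrades the one-dimensional operator-valued constant to all dimensions, with the one-dimensional operator-valued bound from Proposition \ref{prop:gRuminlift} and the trivial comparison \eqref{eq:L<=Lop}. All the substantive work has been done in establishing Proposition \ref{prop:gRuminlift} (the operator-valued version of the optimal momentum decomposition argument) and in the statement of Proposition \ref{laptevweidl}; the corollary is purely a bookkeeping step extracting the numerical consequence.
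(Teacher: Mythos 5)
Your proof is correct and is exactly the paper's own argument: the first inequality is \eqref{eq:L<=Lop}, and the second follows by applying Proposition \ref{laptevweidl} with $d_1=1$ together with the bound $L_{1,1}^{\rm op}/L_{1,1}^{\rm cl}\le 1.618435$ from Proposition \ref{prop:gRuminlift}. Nothing further is needed.
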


The proof of Proposition \ref{laptevweidl} is by now standard, but we sketch it for the sake of completeness. We need the following more general family of Lieb--Thirring inequalities,
\bq \label{eq:LT-gen-eigenvalueop}
\Tr[-\Delta + V]_ -^\alpha \le  L_{\alpha,d}^{\rm op} \int_{\R^d} \tr \left( V(x)_-^{\alpha+\frac{d}{2}}\right) \d x \,,
\eq
as well as the semi-classical constant
$$
L_{\alpha,d}^{\rm cl} = \frac{1}{(2\pi)^2}\int_{\R^d} (\eta^2-1)_-^{\alpha+\frac{d}{2}}\, d\eta =  \frac{\Gamma(\alpha+1)}{(4\pi)^{d/2}\, \Gamma(\alpha+d/2+1)} \, \cdot
$$
where again  $V$ takes now values in the self-adjoint operators on some auxiliary separable Hilbert space $\mathcal{H}$ and its negative part $V(x)_-$ is in the $\alpha+\tfrac{d}{2}$ von Neumann--Schatten ideal, $\tr$ denotes the trace over  $\mathcal{H}$, and $\Tr$ the trace over $L^2(\R^d,\mathcal{H})= L^2(\R^d)\otimes \mathcal{H}$.  

The celebrated result by Laptev and Weidl \cite{LapWei-00} says that $L_{\alpha,d}^{\rm op} = L_{\alpha,d}^{\rm cl}$ for any $\alpha\geq 3/2$ and any $d \ge 1$. (For $d=1$, $\alpha=3/2$ and in the scalar case, this was shown in the original paper of Lieb and Thirring \cite{LieThi-76}.)

\begin{proof}[Proof of Proposition \ref{laptevweidl}]
We follow the argument in \cite{Hundertmark-02} closely: 
Let $d=d_1+d_2$ and decompose accordingly $x=(x_1,x_2)$ with $x_1\in\R^{d_1}$ and $x_2\in\R^{d_2}$ and $-\Delta=-\Delta_1-\Delta_2$. Let $V$ be a function on $\R^d$ taking values in the self-adjoint operators in some Hilbert space $\mathcal H$. For any $x_1\in\R^{d_1}$ we can consider $W(x_1)=-\Delta_{2} + V(x_1,\cdot)$ as a self-adjoint operator in $\tilde{\mathcal H} = L^2(\R^d;\mathcal H)$. Thus, by the operator-valued LT inequality on $\R^{d_1}$,
\begin{align*}
\Tr[-\Delta + V]_ - = \Tr_{L^2(\R^{d_1})} [-\Delta_1 + W]_-
\le L_{1,d_1}^{\rm op} \int_{\R^{d_1}} \Tr_{L^2(\R^{d_2};\mathcal H)} \left( W(x_1)_-^{1+\frac{d_1}{2}}\right) \d x_1 \,.
\end{align*}
Since $1+\frac{d_1}{2} \geq \frac{3}{2}$ the bound from \cite{LapWei-00} implies, for any $x_1\in\R^{d_1}$,
$$
\Tr_{L^2(\R^{d_2};\mathcal H)} \left( W(x_1)_-^{1+\frac{d_1}{2}}\right)
\leq L_{1+\frac{d_1}{2},d_2}^{\rm cl} \int_{\R^{d_2}} \tr \left( V(x_1,x_2)_-^{1+\frac{d}{2}} \right) \d x_2 \,.
$$
Combining the last two inequalities and observing that
$$
L_{1,d_1}^{\rm cl} L_{1+\frac{d_1}{2},d_2}^{\rm cl} = L_{1,d}^{\rm cl}
$$
(see \cite{Hundertmark-02} for a non-computational proof of this identity), we obtain the claimed inequality.
\end{proof}


\section{Low momentum averaging}

%

Our main idea to improve the estimate in Proposition \ref{prop:gRumin} is to average over low momenta $s\le E$ before using the Cauchy--Schwarz inequality  \eqref{eq:CS-operator}. We will actually push forward this idea by adding a weight function. This leads to

\begin{proposition} \label{prop:momentum} For $d\ge 1$,  the best constant in the Lieb--Thirring inequality \eqref{eq:LT} satisfies
\begin{align} \label{eq:momentumbound}
K_d/K_d^{\rm cl} \ge \frac{d 2^{4/d}}{(d+2)^{1+4/d} C_d^{2/d}} \,,
\end{align}
where
\begin{align} \label{eq:inf-Cfl}
\mathcal{C}_d &:= \inf \left\{ \left( \int_0^\infty \varphi^2 \right)^{d/2} \frac{d}{2}\int_0^\infty \frac{\Big(1-\int_0^\infty \varphi(s) f(st) \d s\Big)^2}{t^{1+\frac{d}{2}}}  \d t \right\}
\end{align}
with the infimum taken over all functions $f,\varphi:\R_+\to \R_+$ satisfying $\int_0^\infty f^2=\int_0^\infty \varphi =1$. 

In particular, when $d=1$ we have $K_1/K_1^{\rm cl} \ge 0.471851$ and $L_{1,1}/L_{1,1}^{\rm cl} \le 1.455786$.
\end{proposition}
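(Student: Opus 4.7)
The plan is to combine the Rumin-type momentum decomposition from Proposition~\ref{prop:gRumin} with an averaged Fourier multiplier, and to exploit two complementary pointwise-in-$x$ bounds on its diagonal. As in \eqref{eq:kinetic-representation}, the starting identity is $\Tr(-\Delta\gamma)=\int_{\R^d}\tau(x)\,\d x$ with $\tau(x):=\int_0^\infty(f(u/p^2)\gamma f(u/p^2))(x,x)\,\d u$ and $\int_0^\infty f^2=1$. Given $\varphi\ge 0$ with $\int_0^\infty \varphi=1$, I would introduce the averaged Fourier multiplier $F(E/p^2):=\int_0^\infty\varphi(r)f(rE/p^2)\,\d r$. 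Since $F(0)=1$, the kernel computation \eqref{eq:rho-} applies verbatim with $F$ in place of $f$ and gives $(1-F(E/p^2))^2(x,x)=E^{d/2}\frac{|B_1|}{(2\pi)^d}A_F$, where $A_F=\frac{d}{2}\int_0^\infty(1-F(t))^2\,t^{-1-d/2}\,\d t$.

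The key step is to obtain two opposite pointwise bounds on $(F(E/p^2)\gamma F(E/p^2))(x,x)$. First, applying the operator Cauchy--Schwarz inequality \eqref{eq:CS-operator} with $F$ in place of $f$, taking the $x$-diagonal, and optimizing in $\eps>0$ yields the lower bound
\begin{equation*}
(F(E/p^2)\gamma F(E/p^2))(x,x)\ge\left[\sqrt{\gamma(x,x)}-E^{d/4}\sqrt{\tfrac{|B_1|}{(2\pi)^d}A_F}\right]_+^2.
\end{equation*}
Second---and this is the genuinely new ingredient---I would write $\gamma=\sum_n\lambda_n|\psi_n\rangle\langle\psi_n|$, expand $F$, use the triangle inequality $|\int\varphi(r) h(r)\,\d r|\le\int\varphi(r)|h(r)|\,\d r$, and then apply Cauchy--Schwarz in $r$ via $(\int\varphi\, h)^2\le\|\varphi\|_{L^2}^2\int h^2$ to pull out $\|\varphi\|_{L^2}^2$. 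Summing against $\lambda_n$ and substituting $u=rE$ gives the upper bound
\begin{equation*}
(F(E/p^2)\gamma F(E/p^2))(x,x)\le\frac{\|\varphi\|_{L^2}^2}{E}\,\tau(x).
\end{equation*}

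Chaining these two bounds produces, for all $E>0$ and $x\in\R^d$,
\begin{equation*}
\sqrt{\gamma(x,x)}\le E^{d/4}\sqrt{\tfrac{|B_1|}{(2\pi)^d}A_F}+\|\varphi\|_{L^2}\sqrt{\tau(x)/E}.
\end{equation*}
Since the two terms scale oppositely in $E$, I would minimize pointwise in $x$ over $E>0$; the elementary one-variable optimum occurs at $E^*(x)=\bigl(\tfrac{2\|\varphi\|_{L^2}\sqrt{\tau(x)}}{d\sqrt{|B_1|/(2\pi)^d\,A_F}}\bigr)^{4/(d+2)}$ and, after simplification, yields
\begin{equation*}
\gamma(x,x)\le\left(\tfrac{d+2}{2}\right)^{2}\!\left(\tfrac{2}{d}\right)^{2d/(d+2)}\!\left(\tfrac{|B_1|}{(2\pi)^d}A_F\right)^{2/(d+2)}\|\varphi\|_{L^2}^{2d/(d+2)}\tau(x)^{d/(d+2)}.
\end{equation*}
Solving for $\tau(x)$, integrating in $x$, and inserting $K_d^{\rm cl}=\frac{d(2\pi)^2}{(d+2)|B_1|^{2/d}}$ reduces, after algebra, to $K_d/K_d^{\rm cl}\ge\frac{d\cdot 2^{4/d}}{(d+2)^{1+4/d}}\|\varphi\|_{L^2}^{-2}A_F^{-2/d}$; taking the infimum over admissible $f,\varphi$ and recognizing $\mathcal{C}_d=\|\varphi\|_{L^2}^d A_F$ yields the claimed estimate. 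The $d=1$ numerical bound would then follow by choosing $f$ and $\varphi$ near-optimal for the variational problem defining $\mathcal{C}_1$.

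The main obstacle I anticipate is identifying the correct Cauchy--Schwarz step producing the upper bound in terms of $\tau(x)$ rather than its $x$-integral; this pointwise control is exactly what permits the local-in-$x$ optimization of $E$ and is what forces the $\|\varphi\|_{L^2}^2$ factor to appear, matching the $(\int\varphi^2)^{d/2}$ in $\mathcal{C}_d$. A globally-optimized $E$ would entangle the choice of $\varphi$ with the level-set distribution of $\gamma(x,x)$ and would not reproduce the stated constant; replacing $\|\varphi\|_{L^2}^2$ by the seemingly natural $\|F\|_{L^2}^2$ only recovers the weaker coefficient $\tfrac{d}{d+4}$ of Proposition~\ref{prop:gRumin}.
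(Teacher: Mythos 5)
Your argument is essentially the paper's proof in a lightly reorganized form: your pointwise upper bound $(F(E/p^2)\gamma F(E/p^2))(x,x)\le \|\varphi\|_{L^2}^2 E^{-1}\tau(x)$ is exactly the diagonal-kernel version of the averaging inequality \eqref{eq:oper-bound-lam-f}, your lower bound is \eqref{eq:CS-operator}--\eqref{eq:bound-Af} applied to $g=F$, and chaining them and minimizing over $E$ pointwise in $x$ reproduces \eqref{eq:key-decomposition-2}--\eqref{eq:key-decomposition-3} with the identical constant, so the proof is correct. The only piece left implicit is the $d=1$ numerical claim, which in the paper rests on the explicit trial pair of Lemma \ref{lem:Cd} giving $\mathcal{C}_1\le 0.373556$; your ``near-optimal $f,\varphi$'' should be replaced by such an explicit choice to certify $K_1/K_1^{\rm cl}\ge 0.471851$.
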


\begin{proof} Let $f,\varphi:\R_+\to \R_+$ satisfy $\int_0^\infty f^2=\int_0^\infty \varphi=1$. Recall the momentum decomposition \eqref{eq:kinetic-representation}. We have for any $\psi\in L^2(\R^d)$, $s,s'\in(0,\infty)$,
$$
\langle\psi,f(s/p^2)\gamma f(s'/p^2)\psi\rangle \leq \sqrt{ \langle\psi,f(s/p^2)\gamma f(s/p^2)\psi\rangle }\ \sqrt{ \langle\psi,f(s'/p^2)\gamma f(s'/p^2)\psi\rangle } \,,
$$
and therefore, for every $E>0$,
\begin{align*}
& \int_0^\infty \int_0^\infty \varphi(s/E) \langle\psi,f(s/p^2)\gamma f(s'/p^2)\psi\rangle \varphi(s'/E) \,ds\,ds' \\
& \leq \left( \int_0^\infty \varphi(s/E)  \sqrt{ \langle\psi,f(s/p^2)\gamma f(s/p^2)\psi\rangle } \d s\right)^2 \\
& \leq \left( \int_0^\infty \varphi(s/E)^2 \d s\right) \left( \int_0^\infty \langle\psi,f(s/p^2)\gamma f(s/p^2)\psi\rangle \d s \right).
\end{align*}
This implies that we have the operator inequality 
\begin{align} \label{eq:oper-bound-lam-f}
& \left( \int_0^\infty \varphi^2(s) \d s \right) \left(  \int_0^\infty f(s/p^2)\gamma f(s/p^2) \d s  \right)\nn\\
&=E^{-1}\left( \int_0^\infty \varphi^2(s/E) \d s \right) \left(  \int_0^\infty f(s/p^2)\gamma f(s/p^2) \d s  \right)\nn\\
&\ge E^{-1}\left(\int_0^\infty \varphi(s/E) f(s/p^2) \d s \right) \gamma \left( \int_0^\infty \varphi(s/E) f(s/p^2)  \d s \right)\nn\\
&= E g( E/p^2)  \gamma  g(E/p^2) 
\end{align}
with 
\bq \label{eq:def-g}
g(t):=\int_0^\infty \varphi(s) f(st) \d s.
\eq
Next, by the Cauchy--Schwarz  estimate similarly to \eqref{eq:CS-operator} (thanks to $0\le \gamma\le 1$) we have 
\begin{align} \label{eq:oper-bound-lam-f-a2}
\gamma \le (1+\eps) g( E/p^2)  \gamma  g(E/p^2) + (1+\eps^{-1}) (1-g( E/p^2))^2.
\end{align}
for every $\eps>0$. Combining \eqref{eq:oper-bound-lam-f} and \eqref{eq:oper-bound-lam-f-a2} we get
\begin{align} \label{eq:oper-bound-lam-f-a3}
E\gamma \le (1+\eps)  \left( \int_0^\infty \varphi^2 \right)  \left(  \int_0^\infty f(s/p^2)\gamma f(s/p^2) \d s  \right)  + (1+\eps^{-1}) E(1-g( E/p^2))^2.
\end{align}
Transferring \eqref{eq:oper-bound-lam-f-a3} to a kernel bound, using  the same computation as in \eqref{eq:rho-}-\eqref{eq:Af},  and then  optimizing over $\eps>0$ we obtain
\begin{align} \label{eq:kernel-bound-lam-f}
\left( \int_0^\infty \varphi^2 \right) \int_0^\infty (f(s/p^2)\gamma f(s/p^2))(x,x) \d s \ge \left[\sqrt{E\gamma(x,x)}- \sqrt{E^{1+\frac{d}{2}}  \frac{|B_1|}{(2\pi)^d} A_g} \right]_+^2 \,.
\end{align} 
Then optimizing over $E>0$ leads to
\begin{align} \label{eq:key-decomposition-2}
\left( \int_0^\infty \varphi^2 \right) \int_0^\infty (f(s/p^2)\gamma f(s/p^2))(x,x) \d s &\ge \sup_{E>0} E \left[\sqrt{\gamma(x,x)}- \sqrt{E^{\frac{d}{2}}  \frac{|B_1|}{(2\pi)^d} A_g} \right]_+^2\nn\\
&= \gamma(x,x)^{1+2/d} \frac{(2\pi)^2}{|B_1|^{2/d}} \cdot \frac{2^{4/d}d^2}{(d+2)^{2+4/d} A_g^{2/d}}.
\end{align}
Inserting this into \eqref{eq:kinetic-representation} we conclude that 
\begin{align} \label{eq:key-decomposition-3}
 \Tr(-\Delta \gamma) \ge \left( \int_{\R^d} \gamma(x,x)^{1+2/d} \d x \right)  \frac{(2\pi)^2}{|B_1|^{2/d}} \cdot \frac{2^{4/d}d^2}{(d+2)^{2+4/d} A_g^{2/d}\left( \int_0^\infty \varphi^2 \right)},
\end{align}
namely the best constant in \eqref{eq:LT} satisfies
$$
K_d/K_d^{\rm cl} \le \frac{2^{4/d}d}{(d+2)^{1+4/d} A_g^{2/d}\left( \int_0^\infty \varphi^2 \right)} \cdot
$$
Optimizing over $f,\varphi$ leads to \eqref{eq:momentumbound}. 

When $d=1$, using the upper bound $\mathcal{C}_1\le 0.373556$ in Lemma \ref{lem:Cd} below, we obtain $K_1/K_1^{\rm cl}\ge 0.471851...$ and $L_{1,1}/L_{1,1}^{\rm cl}\le 1.455785...$. 
\end{proof}

We end this section with

\begin{lemma} \label{lem:Cd} When $d=1$, the constant $\mathcal{C}_d$ in \eqref{eq:inf-Cfl} satisfies
$$
\frac{1}{3} \le \mathcal{C}_1 \le 0.373556.
$$
\end{lemma}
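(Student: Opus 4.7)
The plan is to prove the two inequalities separately.

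\emph{Lower bound $\mathcal{C}_1 \ge 1/3$.} The key is a pointwise bound on $g(t) := \int_0^\infty \varphi(s) f(st)\,ds$ using only the normalization $\int_0^\infty f^2 = 1$. Applying Cauchy--Schwarz in the $s$-variable, together with the substitution $u = st$ to evaluate $\int_0^\infty f(st)^2\,ds = 1/t$, yields
\begin{equation*}
g(t) \le \Big(\int_0^\infty \varphi(s)^2\,ds\Big)^{1/2} \Big(\int_0^\infty f(st)^2\,ds\Big)^{1/2} = \|\varphi\|_2 \, t^{-1/2}.
\end{equation*}
Write $M := \|\varphi\|_2$. For $t \ge M^2$ one has $0 \le g(t) \le 1$, so $1 - g(t) \ge 1 - M/\sqrt{t} \ge 0$. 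Squaring this pointwise inequality and restricting the $t$-integral to $[M^2,\infty)$ gives
\begin{equation*}
\tfrac{1}{2}\int_0^\infty \frac{(1-g(t))^2}{t^{3/2}}\,dt \ge \tfrac{1}{2}\int_{M^2}^\infty \frac{(1 - M/\sqrt{t})^2}{t^{3/2}}\,dt = \frac{1}{M}\int_0^1 (1-u)^2\,du = \frac{1}{3M},
\end{equation*}
where the middle equality follows from the change of variables $u = M/\sqrt{t}$. Multiplying by the prefactor $\|\varphi\|_2 = M$ in the definition of $\mathcal{C}_1$ then produces $\mathcal{C}_1 \ge 1/3$.

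\emph{Upper bound $\mathcal{C}_1 \le 0.373556$.} Here the plan is to exhibit explicit admissible test functions and evaluate the functional on them. The lower-bound analysis suggests that a near-optimal $g$ should approximate the envelope $\min(1, \|\varphi\|_2/\sqrt{t})$, but the Cauchy--Schwarz saturation condition would force non-normalizable power-law $\varphi, f$, so one must settle for a smoothed version. A tractable two- or three-parameter ansatz, for instance $\varphi(s) = a e^{-as}$ paired with a rational $f(t) = c_{\alpha,\mu}(1 + \mu t^\alpha)^{-1}$ with $\alpha$ near $1/2$ and $c_{\alpha,\mu}$ fixed by $\int f^2 = 1$, yields a $g(t)$ expressible in closed form via exponential integrals or hypergeometric functions. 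The scale invariance of the functional under $\varphi(\cdot) \mapsto \lambda \varphi(\lambda \cdot)$ eliminates one parameter, so that \eqref{eq:inf-Cfl} reduces to a low-dimensional numerical minimization, from which the claimed upper bound follows by choosing parameters at (or near) the numerical optimum.

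\emph{Main obstacle.} The lower bound has a clean, closed-form derivation, but the upper bound depends on exhibiting sufficiently good test functions. The Euler--Lagrange system for \eqref{eq:inf-Cfl} is coupled and nonlinear in $\varphi$ and $f$ and is unlikely to admit a closed-form solution; moreover, the Cauchy--Schwarz step used for the lower bound shows that a hypothetical optimizer cannot saturate the envelope, so the gap between $1/3$ and the true value of $\mathcal{C}_1$ is genuine. The challenge is therefore to choose a parametric ansatz rich enough to come within the stated numerical tolerance of the true infimum.
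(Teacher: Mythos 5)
Your lower bound is correct and is exactly the paper's argument: Cauchy--Schwarz in $s$ gives $g(t)\le \|\varphi\|_2\, t^{-1/2}$, the pointwise bound $(1-g(t))^2\ge \bigl[1-\|\varphi\|_2 t^{-1/2}\bigr]_+^2$ follows, and the explicit integral yields $1/3$ after multiplying by the prefactor. No issues there.

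The upper bound, however, is not actually proved in your proposal, and this is a genuine gap rather than a presentational one. The statement $\mathcal{C}_1\le 0.373556$ is a claim about a specific number, and the only way to establish it (in the paper as well) is to exhibit concrete admissible $(f,\varphi)$ and evaluate the functional on them with enough accuracy. You only describe a plan ("a tractable two- or three-parameter ansatz \ldots reduces to a low-dimensional numerical minimization, from which the claimed upper bound follows"), without producing the functions or the value; indeed you yourself flag that the challenge is whether your ansatz is rich enough. There is reason to doubt that it is: the paper's natural first choice, $f(t)=(1+\mu t^{3/2})^{-1}$ (the optimizer of the uncoupled problem in Lemma \ref{lem:min-Af}) paired with $\varphi(t)=5(1-t^{1/4})\1(t\le 1)$, only gives $\mathcal{C}_1\le 0.381378$, and reaching $0.373556$ required the more elaborate pair $f(t)=(1+\mu_0 t^{4.5})^{-0.25}$, $\varphi(t)=c_0(1-t^{0.36})^{2.1}(1+t)^{-1}\1(t\le 1)$. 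Your suggested family (exponential $\varphi$, $f(t)=c_{\alpha,\mu}(1+\mu t^{\alpha})^{-1}$ with $\alpha$ near $1/2$) is effectively one-parameter after scaling and normalization, and its adequacy is unverified. There is also a small technical slip in it: the functional is finite only if $1-g(t)$ vanishes fast enough as $t\to 0^+$, which forces $f(0^+)=1$; so you cannot let a free multiplicative constant $c_{\alpha,\mu}$ be fixed by $\int_0^\infty f^2=1$ independently of this requirement --- rather you must take $c=1$ and use $\mu$ for the normalization (as the paper does), and note that $\int_0^\infty f^2<\infty$ then requires $\alpha>1/2$ strictly. To complete the proof you should fix explicit test functions (the paper's pair, or any other) and document the numerical evaluation of $\bigl(\int_0^\infty\varphi^2\bigr)^{1/2}\cdot\tfrac12\int_0^\infty (1-g(t))^2 t^{-3/2}\,\d t$ showing it is at most $0.373556$.
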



\begin{proof} Let $f,\varphi:\R_+\to \R_+$ satisfy $\int_0^\infty f^2=\int_0^\infty \varphi=1$. Denote $g$ as in \eqref{eq:def-g} and  $a:=\int_0^\infty \varphi^2$. By the Cauchy--Schwarz inequality
$$
g(t)=\int_0^\infty \varphi(s) f(st) \d s \le \left(\int_0^\infty \varphi^2(s) \d s \right)^{1/2} \left(\int_0^\infty f^2(ts) \d s \right)^{1/2} = \sqrt{\frac{a}{t}}.
$$
Therefore, when $d=1$ we get the desired  lower bound 
$$
a^{1/2} \int_0^\infty \frac{(1-g(t))^2}{2t^{3/2}} \d t \ge a^{1/2} \int_0^\infty \frac{\left[ 1-\sqrt{\frac{a}{t}}\right]_+^2}{2t^{3/2}} \d t = \frac{1}{3}.
$$

The upper bound on $\mathcal{C}_1$ requires an explicit choice of $(f,\varphi)$. The analysis from Section \ref{sec:momentum-decomposition}  suggests the following choice 
$$f(t)=(1+\mu t^{3/2})^{-1}, \quad \mu= \left[\frac{4\pi}{9\sqrt{3}}\right]^{3/2}, \quad \varphi(t)=5(1-t^{1/4})\1(t\le 1),$$
which gives $\mathcal{C}_{1} \le 0.381378$. We can do slightly better by taking
$$
f(t)= (1+\mu_0 t^{4.5})^{-0.25},  \quad \varphi(t)= c_0 \frac{(1-t^{0.36})^{2.1}}{1+t}\1(t\le 1)
$$
with $\mu_0$ and $c_0$ determined by $\int_0^\infty f^2= \int_0^\infty \varphi=1$, leading to  $\mathcal{C}_{1}\le 0.373556$. 
\end{proof}


\section{Lifting to higher dimensions. II}

In this section we proceed analogously to Section \ref{sec:lifting1} to extend  Proposition \ref{prop:momentum} to the operator-valued case.

\begin{proposition}\label{prop:momentumop} For $d\ge 1$, the best constant in the Lieb--Thirring inequality \eqref{eq:LTop} satisfies
\begin{align} \label{eq:momentumboundop}
K_d^{\rm op}/K_d^{\rm cl} \ge \frac{d 2^{4/d}}{(d+2)^{1+4/d} \mathcal{C}_d^{2/d}} 
\end{align}
with $\mathcal{C}_d$ from \eqref{eq:inf-Cfl}. In particular, when $d=1$ we have $K_1^{\rm op}/K_1^{\rm cl} \ge 0.471851$ and $L_{1,1}^{\rm op}/L_{1,1}^{\rm cl} \le 1.455786$.
\end{proposition}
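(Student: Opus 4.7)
The plan is to carry out the operator-valued analog of the proof of Proposition \ref{prop:momentum}, following exactly the strategy by which Proposition \ref{prop:gRuminlift} extended Proposition \ref{prop:gRumin}. Let $\gamma$ be an operator on $L^2(\R^d;\mathcal{H})$ with $0\le \gamma\le 1$. By density I may assume $\mathcal{H}$ is finite dimensional and $\gamma$ is finite rank with smooth eigenfunctions, and start from the operator-valued momentum decomposition \eqref{eq:kinetic-representationop}.

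The first step is to observe that the key operator inequality \eqref{eq:oper-bound-lam-f-a3} remains valid as an operator inequality on $L^2(\R^d;\mathcal{H}) = L^2(\R^d)\otimes\mathcal{H}$ for any $f,\varphi:\R_+\to\R_+$ with $\int_0^\infty f^2 = \int_0^\infty \varphi = 1$ and any $E>0$. The two Cauchy--Schwarz bounds on which it is based --- the sesquilinear one giving \eqref{eq:oper-bound-lam-f} and the quadratic one giving \eqref{eq:oper-bound-lam-f-a2}, the latter also using $0\le \gamma\le 1$ --- are abstract Hilbert-space arguments that do not exploit the scalar structure of the fiber, and hence transfer verbatim.

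The second step is the passage from the operator inequality on $L^2(\R^d;\mathcal{H})$ to an eigenvalue inequality in $\mathcal{H}$ at each point $x\in\R^d$, exactly as in the passage from \eqref{eq:kernelbound} to \eqref{eq:CS-kernelop} in Proposition \ref{prop:gRuminlift}. The variational principle, together with the observation that the $(1-g(E/p^2))^2(x,x)$ term is a scalar multiple of the identity in $\mathcal{H}$ whose value is computed as in \eqref{eq:rho-}, yields for each $n\in\N$ an eigenvalue bound mirroring \eqref{eq:oper-bound-lam-f-a3} with $\gamma(x,x)$ replaced by $\lambda_n(\gamma(x,x))$. Optimizing first in $\eps>0$ and then in $E>0$ eigenvalue by eigenvalue --- as in the passage from \eqref{eq:kernel-bound-lam-f} to \eqref{eq:key-decomposition-2} --- reproduces the pointwise bound of Proposition \ref{prop:momentum} for each $n$.

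Finally, summing these eigenvalue bounds over $n$, integrating in $x$, and using \eqref{eq:kinetic-representationop} produces the trace inequality underlying \eqref{eq:LTop} with the same constant as in Proposition \ref{prop:momentum}; optimizing over $(f,\varphi)$ yields \eqref{eq:momentumboundop}, and the one-dimensional numerical bound $K_1^{\rm op}/K_1^{\rm cl}\ge 0.471851$ is inherited unchanged from Lemma \ref{lem:Cd}. The only subtlety, as with Proposition \ref{prop:gRuminlift}, is ensuring that the eigenvalue-wise optimization in $\eps$ and $E$ is legitimate: this is harmless because the operator inequality holds for every $\eps$ and $E$ separately, so the optimizing values may be chosen as functions of $n$ and $x$ after the variational principle has been applied.
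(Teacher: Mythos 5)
Your proposal is correct and follows essentially the same route as the paper: transfer the operator inequality \eqref{eq:oper-bound-lam-f-a3} to the operator-valued setting, use the variational principle together with the fact that $(1-g(E/p^2))^2(x,x)$ is a scalar multiple of $\1_{\mathcal H}$ to get the eigenvalue-wise bound on $\lambda_n(\gamma(x,x))$, optimize in $\eps$ and $E$ for each $n$ and $x$, then sum over $n$, integrate in $x$ via \eqref{eq:kinetic-representationop}, and optimize over $(f,\varphi)$ with the numerics from Lemma \ref{lem:Cd}. No substantive differences from the paper's argument.
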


Combining this proposition with Proposition \ref{laptevweidl} (for $d_1=1$) and \eqref{eq:L<=Lop} we obtain Theorem \ref{thm:main}. It remains to prove the proposition.

\begin{proof}  Let $f,\varphi:\R_+\to \R_+$ satisfy $\int_0^\infty f^2=\int_0^\infty \varphi=1$ and denote $g$ as in \eqref{eq:def-g}. We follow the proof of Proposition \ref{prop:momentum} to arrive at the operator inequality \eqref{eq:oper-bound-lam-f-a3}. As in the proof of Proposition \ref{prop:gRuminlift} this implies for any $x\in\R^d$ and $n\in\N$,
$$
E \lambda_n(\gamma(x,x)) \leq (1+\eps)  \left( \int_0^\infty \varphi^2 \right) \lambda_n \left(  \int_0^\infty (f(s/p^2)\gamma f(s/p^2))(x,x) \d s  \right)  + (1+\eps^{-1}) E^{1+\frac{d}{2}}  \frac{|B_1|}{(2\pi)^d} A_g.
$$
Optimizing over $\eps>0$ we obtain
\begin{align*}
\left( \int_0^\infty \varphi^2 \right) \lambda_n\left( \int_0^\infty (f(s/p^2)\gamma f(s/p^2))(x,x) \d s \right) \ge \left[\sqrt{E\lambda_n(\gamma(x,x))}- \sqrt{E^{1+\frac{d}{2}}  \frac{|B_1|}{(2\pi)^d} A_g} \right]_+^2 \,.
\end{align*}
Finally, optimizing over $E>0$ leads to
\begin{align*}
\left( \int_0^\infty \varphi^2 \right) \lambda_n\left(\int_0^\infty (f(s/p^2)\gamma f(s/p^2))(x,x) \d s \right) &\ge \sup_{E>0} E \left[\sqrt{\lambda_n(\gamma(x,x))}- \sqrt{E^{\frac{d}{2}}  \frac{|B_1|}{(2\pi)^d} A_g} \right]_+^2\nn\\
&= \lambda_n(\gamma(x,x))^{1+2/d} \frac{(2\pi)^2}{|B_1|^{2/d}} \cdot \frac{2^{4/d}d^2}{(d+2)^{2+4/d} A_g^{2/d}}.
\end{align*}
Inserting this into \eqref{eq:kinetic-representation} we conclude that
\begin{align*}
 \Tr(-\Delta \gamma) \ge \left( \int_{\R^d} \tr \left( \gamma(x,x)^{1+2/d} \right) \d x \right)  \frac{(2\pi)^2}{|B_1|^{2/d}} \cdot \frac{2^{4/d}d^2}{(d+2)^{2+4/d} A_g^{2/d}\left( \int_0^\infty \varphi^2 \right)}.
\end{align*}
Finally, it remains to optimize over $f,\varphi$ to obtain \eqref{eq:momentumboundop}. The numerical values when $d=1$ are obtained from the upper bound on $\mathcal{C}_1$ in Lemma \ref{lem:Cd}. 
\end{proof}

\section{Bounds with fractional operators}

The proof of Theorem \ref{thm:fractional} is essentially the same as that of Theorem \ref{thm:main} (except we do not use the lifting argument)  and we only sketch the major steps.

\begin{proof}[Proof of Theorem \ref{thm:fractional}]  Let $f:\R_+\to \R_+$ satisfy $\int_0^\infty f^2=1$. We have the analogue of \eqref{eq:kinetic-representation},
\begin{align}\label{eq:kinetic-representationfrac}
\Tr((-\Delta)^\sigma \gamma)= \int_{\R^d} \left[ \int_0^\infty (f(s/|p|^{2\sigma})\gamma f(s/|p|^{2\sigma}))(x,x) \d s \right] \d x.
\end{align}
Using the Cauchy--Schwarz inequality as in \eqref{eq:CS-operator} with a parameter $\eps>0$ and optimizing over this parameter we obtain a generalization of \eqref{eq:CS-kernel}, 
\begin{align} \label{eq:CS-kernelfrac}
\sqrt{\gamma(x,x)} \le \sqrt{(f(s/|p|^{2\sigma})\gamma f(s/|p|^{2\sigma}))(x,x)} + \sqrt{(1-f(s/|p|^{2\sigma}))^2 (x,x)}
\end{align}
for all $x\in\R^d$. We now compute
\begin{align} \label{eq:rho-frac}
(1-f(s/|p|^{2\sigma}))^2 (x,x) = s^{\frac{d}{2\sigma}}  \frac{|B_1|}{(2\pi)^d} A_f^{(\sigma)}
\end{align}
where 
\begin{align} \label{eq:Affrac}
A_f^{(\sigma)}:=  \frac{d}{2\sigma}\int_0^\infty \frac{(1-f(t))^2}{t^{1+\frac{d}{2\sigma}}}  \d t.
\end{align}
Consequently, we deduce from \eqref{eq:CS-kernelfrac} that
\begin{align} \label{eq:bound-Affrac}
(f(s/|p|^{2\sigma})\gamma f(s/|p|^{2\sigma}))(x,x)  \ge  \left[\sqrt{\gamma(x,x)}-\sqrt{s^{\frac{d}{2\sigma}}  \frac{|B_1|}{(2\pi)^d} A_f^{(\sigma)}} \, \right]_+^2.
\end{align}
Inserting \eqref{eq:bound-Affrac} into \eqref{eq:kinetic-representationfrac} and integrating over $s>0$ lead to
\begin{align} \label{eq:Rumin-boundfrac}
\Tr((-\Delta)^\sigma \gamma) \ge \left( \int_{\R^d}  \gamma(x,x)^{1+\frac{2\sigma}{d}} \d x \right) \left( \frac{|B_1|}{(2\pi)^d} A_f^{(\sigma)}\right)^{-\frac{2\sigma}{d}} \frac{d^2}{(d+2\sigma)(d+4\sigma)} .
\end{align}
Thus,
\begin{align} \label{eq:Rumin-bound-Kfrac}
K_{d,\sigma}/K_{d,\sigma}^{\rm cl} \ge \frac{d}{d+4\sigma}  \left( A_f^{(\sigma)}\right)^{-\frac{2\sigma}{d}}. 
\end{align}
Lemma \ref{lem:min-Af} provides the minimium value of $A_f^{(\sigma)}$ optimized over $f$ with $\int_0^\infty f^2 =1$. This leads to the first desired bound 
\begin{align} \label{eq:kinetic-Rumin-frac-1}
K_{d,\sigma}/K_{d,\sigma}^{\rm cl}\ge  \frac{d}{d+4\sigma} 
	\left[ \frac{(d+2\sigma)^2 \sin\left( \frac{2\pi\sigma}{d+2\sigma} \right)}{2\pi\sigma d} \right]^{1+\frac{2\sigma}{d}}\, .
\end{align}

Next, we introduce $\varphi:\R_+\to \R_+$ satisfy $\int_0^\infty \varphi=1$ and denote $g$ as in \eqref{eq:def-g}. Then proceeding as in \eqref{eq:oper-bound-lam-f-a3} we have the operator inequality
\begin{align*} 
E\gamma \le (1+\eps)  \left( \int_0^\infty \varphi^2 \right)  \left(  \int_0^\infty f(s/|p|^{2\sigma})\gamma f(s/|p|^{2\sigma}) \d s  \right)  + (1+\eps^{-1}) E(1-g( E/|p|^{2\sigma}))^2.
\end{align*}
Transfering the latter  to a kernel bound, using the same computation as in \eqref{eq:rho-frac}-\eqref{eq:Affrac},  and  optimizing over $\eps>0$ and then $E>0$ we obtain the following analogue of \eqref{eq:key-decomposition-2},
\begin{align} \label{eq:key-decomposition-2-frac}
&\left( \int_0^\infty \varphi^2 \right) \int_0^\infty (f(s/|p|^{2\sigma})\gamma f(s/|p|^{2\sigma}))(x,x) \d s \nn\\
&\ge \sup_{E>0} E \left[\sqrt{\gamma(x,x)}- \sqrt{E^{\frac{d}{2\sigma}}  \frac{|B_1|}{(2\pi)^d} A_g^{(\sigma)}} \right]_+^2\nn\\
&= \gamma(x,x)^{1+\frac{2\sigma}{d}} \left( \frac{|B_1|}{(2\pi)^d} A_g^{(\sigma)}\right)^{-\frac{2\sigma}{d}}  \left( \frac{d}{d+2\sigma}\right)^2  \left( \frac{2\sigma}{d+2\sigma}\right)^{\frac{4\sigma}{d}}. 
 \end{align}
Inserting \eqref{eq:key-decomposition-2-frac} into \eqref{eq:kinetic-representationfrac}, and then optimizing over $f,\varphi$ we arrive at 
$$
K_{d,\sigma}/K_{d,\sigma}^{\rm cl} \ge  \frac{d}{d+2\sigma}  \left( \frac{2\sigma}{d+2\sigma}\right)^{\frac{4\sigma}{d}} \left( A_g^{(\sigma)}\right)^{-\frac{2\sigma}{d}} \left( \int_0^\infty \varphi^2 \right)^{-1}
$$
Optimizing over $f,\varphi$ gives the second desired estimate 
\begin{align} \label{eq:kinetic-Rumin-frac-2}
K_{d,\sigma}/K_{d,\sigma}^{\rm cl}\ge  \frac{d}{d+2\sigma}  \left( \frac{2\sigma}{d+2\sigma}\right)^{\frac{4\sigma}{d}} \mathcal{C}_{d,\sigma}^{-\frac{2\sigma}{d}}
\end{align}
with $\mathcal{C}_{d,\sigma}$ given in \eqref{eq:inf-Cfl-frac}. 

Finally, in the physical case $\sigma=1/2$ and $d=3$, by taking the trial choice 
$$
f(t)=(1+\mu_0 t^{10})^{1/4}, \quad \varphi(t)=c_0 (1-t^2)^4 \1(t\le 1)
$$ 
with $\mu_0$ and $c_0$ determined by $\int_0^\infty f^2= \int_0^\infty \varphi=1$, we obtain $\mathcal{C}_{d,\sigma} \le 0.046736$, which implies $K_{d,\sigma}/K_{d,\sigma}^{\rm cl}\ge 0.826297$ by \eqref{eq:kinetic-Rumin-frac-2}.
\end{proof}


\begin{thebibliography}{18}

\bibitem{AbrSte-64} M. Abramowitz, and I. A. Stegun, Handbook of Mathematical Functions with Formulas, Graphs, and Mathematical Tables. Applied Mathematics Series 55 (1964). United States Department of Commerce, National Bureau of Standards. 

\bibitem{BN}
  Bak J., Newman D.J.:
  \textit{Complex Analysis}, Springer, 2010.


\bibitem{BlaStu-96} Ph. Blanchard and J. Stubbe, Bound states for Schr\"odinger Hamiltonians: Phase
Space Methods and Applications. Rev. Math. Phys., 35, 504-547 (1996)

\bibitem{CarFraLie-14} E. A. Carlen, R. L. Frank and E. H. Lieb, Stability estimates for the lowest eigenvalue of a Schrödinger operator, Geom. Funct. Anal. 24 (2014), no. 1, 63-84.

\bibitem{Daubechies-83} I. Daubechies, An uncertainty principle for fermions with generalized kinetic
energy, Commun. Math. Phys. 90, 511-520 (1983)

\bibitem{DolLapLos-08} J. Dolbeault, A. Laptev and M. Loss, Lieb-Thirring inequalities with improved constants,
J. Eur. Math. Soc. 10 (2008),  1121-1126.

\bibitem{DysLen-67} F. J. Dyson and A. Lenard, Stability of matter. I, J. Math. Phys. 8 (1967),  423-434; II. J. Math. Phys. 9 (1968),  698-711.


\bibitem{EdeFoi-91} A. Eden and C. Foias, A simple proof of the generalized Lieb-Thirring
inequalities in one-space dimension, J. Math. Anal. Appl. 162 (1991), 250-254.


\bibitem{Frank-14} R. L. Frank, Cwikel's theorem and the CLR inequality. J. Spectral Theory 4 (2014), no. 1, 1-21.

\bibitem{FraLewLieSei-13} R. L. Frank, M. Lewin, E.H. Lieb and R. Seiringer, A positive density analogue of the Lieb-Thirring inequality. Duke Math. J. 162 (2013), no. 3, 435-495.
%
%
%
%

\bibitem{HO-77} M. Hoffmann-Ostenhof and T. Hoffmann-Ostenhof, Schr\"odinger inequalities
and asymptotic behavior of the electron density of atoms and molecules, Phys. Rev.
A 16 (1977),  1782-1785.



\bibitem{Hundertmark-02} D. Hundertmark, On the number of bound states for Schr\"o�dinger operators with operator-valued potentials, Ark. Mat. 40 (2002), 73-87.

\bibitem{HunLapWei-00} D. Hundertmark, A. Laptev and T. Weidl, New bounds on the Lieb-Thirring constants, Invent. Math. 140 (2000), pp. 693-704.


\bibitem{Keller-61} J. B. Keller, Lower bounds and isoperimetric inequalities for eigenvalues of the Schr\"odinger equation. J. Mathematical Phys. 2 (1961), 262-266.

\bibitem{Laptev-97} A. Laptev, Dirichlet and Neumann Eigenvalue Problems on Domains in Euclidean Spaces, J. Func. Anal. 151 (1997), 531-545.

\bibitem{LapWei-00} A. Laptev and T. Weidl, Sharp Lieb-Thirring inequalities in high dimensions. Acta Math., 184 (2000), 87-111.


%
%
%
%
\bibitem{Lieb-84} E.H. Lieb, On characteristic exponents in turbulence. Commun. Math. Phys. 82 (1984), pp. 473-480.




\bibitem{LieLos-01} E.~H. Lieb and M.~Loss, Analysis, Second edition, Graduate Studies in Mathematics, American Mathematical Society, Providence, RI, 2001.
	
\bibitem{LieThi-75} E. H. Lieb and W. E. Thirring, Bound on kinetic energy of fermions which proves
stability of matter, Phys. Rev. Lett. 35 (1975),  687-689.

\bibitem{LieThi-76} E. H. Lieb and W. E. Thirring, Inequalities for the moments of the eigenvalues of the Schr\"odinger Hamiltonian and their relation to Sobolev inequalities, in Studies in Mathematical Physics,
Princeton University Press, 1976,  269-303.

\bibitem{LunSol-13} D. Lundholm and J. P. Solovej, Hardy and Lieb-Thirring inequalities for anyons. Commun. Math. Phys. 322 (2013),  883-908. 


\bibitem{Nam-18} P. T. Nam, Lieb-Thirring inequality with semiclassical constant and gradient error term. J. Funct. Anal. 274 (2018), 1739-1746. 





%
%
%
%
%
%

\bibitem{Rumin-11} A. Rumin, Balanced distribution-energy inequalities and related entropy bounds, Duke
Math. J., 160 (2011), 567-597.

\end{thebibliography}
\end{document}